\newtheorem{theorem}{\bf{Theorem}}
\newtheorem{proposition}{\bf{Proposition}}
\newtheorem{definition}{\bf{Definition}}
\newtheorem{remark}{\bf{Remark}}
\newtheorem{example}{\bf{Example}}
\begin{document}
%
\title{Polar Codes: Analysis and Construction Based on Polar Spectrum}

\author{Kai Niu, \IEEEmembership{Member,~IEEE}, Yan Li, \IEEEmembership{Student Member,~IEEE}, Weiling Wu, \IEEEmembership{Member,~IEEE}
\thanks
{
This work is supported by the National Key Research and Development Program of China (No. 2018YFE0205501), National Natural Science Foundation of China (No. 61671080) and Qualcomm Corporation.
K. Niu, Y. Li, W. Wu are with the Key Laboratory of Universal Wireless Communications, Ministry of Education,
Beijing University of Posts and Telecommunications, Beijing, 100876, China (e-mail: \{niukai, Lyan, weilwu\}@bupt.edu.cn). \protect\\
}}

\maketitle

\begin{abstract}
Polar codes are the first class of constructive channel codes achieving the symmetric capacity of the binary-input discrete memoryless channels. But the analysis and construction of polar codes involve the complex iterative-calculation. In this paper, by revisiting the error event of the polarized channel, a new concept, named polar spectrum, is introduced from the weight distribution of polar codes. Thus we establish a systematic framework in term of the polar spectrum to analyze and construct polar codes. By using polar spectrum, we derive the union bound and the union-Bhattacharyya (UB) bound of the error probability of polar codes and the upper/lower bound of the symmetric capacity of the polarized channel. The analysis based on the polar spectrum can intuitively interpret the performance of polar codes under successive cancellation (SC) decoding. Furthermore, we analyze the coding structure of polar codes and design an enumeration algorithm based on the MacWilliams identities to efficiently calculate the polar spectrum. In the end, two construction metrics named UB bound weight (UBW) and simplified UB bound weight (SUBW) respectively, are designed based on the UB bound and the polar spectrum. Not only are these two constructions simple and explicit for the practical polar coding, but they can also generate polar codes with similar (in SC decoding) or superior (in SC list decoding) performance over those based on the traditional methods.
\end{abstract}

\begin{IEEEkeywords}
Polar codes, Subcode, Polar subcode, Polar spectrum, Polar weight distribution, Union bound, Union-Bhattacharyya (UB) bound.
\end{IEEEkeywords}

\IEEEpeerreviewmaketitle

\section{Introduction}
\label{section_I}
\subsection{Relative Research}
\IEEEPARstart{A}{s} the first constructive capacity-achieving coding scheme, polar codes, invented by Ar{\i}kan in 2009 are a great breakthrough \cite{Polarcode_Arikan} in channel coding theory. Since polar codes demonstrate advantages in error performance and other attractive application prospects, after nine years of research effort, polar codes became a coding standard for the control channel in the fifth generation (5G) wireless communication system \cite{5GNR_38212} in 2018. As the most significant concept, channel polarization was introduced to interpret the behaviour of polar coding. The Bhattacharyya parameter combined with the mutual information chain rule are used to evaluate the reliability of the polarized channel and to analyze the convergence behavior of polar codes. Consequently, this theoretical framework based on the channel polarization plays a key role to design and optimize the polar codes.

According to the dependence of the original channel information, the construction method of polar codes usually falls into two categories. In the first category, by using some channel parameters of the binary-input discrete memoryless channel (B-DMC), such as, erasure probability in binary erasure channel (BEC) or signal-to-noise ratio (SNR) in binary-input additive white Gaussian noise (BI-AWGN) channel, the reliability of the polarized channel can be iteratively calculated or evaluated based on the recursive structure of polar coding and the channel parameters of the original channels are not directly associated with the reliability. In what follows, this category is known as the channel-dependent or implicit construction. In the second category, the reliability of the polarized channel can be ordered based on some channel-independent characteristics of polar codes, which is named as the explicit construction. Due to the independence of channel condition, the explicit construction is more desired for the practical application.

For the implicit construction, Ar{\i}kan first proposed a recursive calculation of Bhattacharyya parameter \cite{Polarcode_Arikan} to evaluate the reliability of the polarized channel, whereas this method is only precise for the coding construction in a BEC. For other B-DMCs, such as binary symmetry channel (BSC) or BI-AWGN channel, exact calculation of Bhattacharyya parameter will involve the high-complexity Monte-carlo integration while approximate iterative calculation will result in some performance losses. Subsequently, Mori \emph{et al.} designed the density evolution (DE) algorithm to track the distribution of logarithmic likelihood ratio (LLR) and calculate the error probability under the successive cancellation (SC) decoding \cite{DE_Mori}, which theoretically exhibits the highest
accuracy. However, the high-precision DE algorithm has a complexity of $O(N\xi \log{\xi})$, where $N$ is the code length, $\xi$ denotes the number of samples and its typical value is about $10^5$, implying it's a time-consuming process in practical application. In \cite{Tal_Vardy}, Tal and Vardy proposed an iterative algorithm to evaluate the upper/lower bound on the error probability of each polarized channel, which can achieve almost the same accuracy as DE with a lower complexity of $O(N{\mu}^2\log\mu)$, where $\mu$ is a fixed integer far less than $\xi$. Afterwards, Trifonov advocated the use of Gaussian approximation (GA) to estimate the error probability in AWGN channel \cite{GA_Trifonov} thereby obtain good accuracy with further reduced complexity $O(N\log N)$. Lately, in order to further increase the accuracy of GA construction, we devised an improved GA algorithm \cite{IGA_Dai} for the polar codes with the long code length. Generally, all these algorithms belong to the implicit construction since the polarized channel reliability is derived based on the iterative calculation which relies on some parameters of the original channel and the large variation of channel condition may affect the reliability order of the polarized channel.

Commonly, from the viewpoint of system design, the coding construction should be explicit and independent of channel condition so as to facilitate the implementation of the encoder and decoder. Therefore, implicit construction is not convenient for the practical application of polar coding, on the contrary, the explicit construction is a more desirable selection. Sch$\rm\ddot{u}$rch \emph{et al.} \cite{PO_Schurch} found that the reliability order of a part of polarized channels is invariant thereby introduced the concept of partial order (PO). It is an efficient tool to reduce the complexity of polar coding construction. Nevertheless, we still need to calculate the reliabilities of the rest polarized channels. Furthermore, He \emph{et al.} \cite{PW_He} exploited the index feature of polarized channels and proposed an explicit construction, named polarized weight (PW) algorithm. Although PW is an empirical construction, amazingly, polar codes constructed by PW can achieve almost the same performance as those constructed by GA algorithm. In fact, the polar codes in 5G standard \cite{5GNR_38212} is constructed by using a fixed polar reliability sequence, which is universal for all the code configuration and obtained by computer searching \cite{5Gpolar_Bioglio}.

\subsection{Motivation}
Traditionally, good channel codes, such as Turbo/LDPC code, can be evaluated and optimized based on the distance spectrum or weight distribution \cite{Book_Lin}. However, the high complexity involved in the weight distribution calculation of polar codes \cite{Distance_spectrum,Weight_distribution} makes it unrealistic to use such metrics for the polar code design. Although there are plenty of research results on the weight distribution in the classic coding theory, it is lack of the theoretic interpretation based on distance spectrum/weight distribution for the channel polarization. Obviously, there is a fault between the research of the classic channel coding and that of polar coding. Thus, in this paper, we focus on the theoretic framework based on distance spectrum to thoroughly interpret the behavior of polar codes and whereby deduce some new constructions that are analytical and explicit.

\subsection{Main Contributions}
In this paper, we introduce a new concept on the distance spectrum of polar codes, named polar spectrum, and establish a complete framework to analyze the performance of the polar codes under the SC decoding. Based on this framework, we obtain new explicit constructions for the polar coding. The main contributions of this paper can be summarized as follows.

\begin{enumerate}[1)]
  \item First, the theoretical framework based on the polar spectrum is established. We revisit the error event analysis of polarized channel and find that one polarized channel is associated with a subcode and its codeword subset, namely polar subcode. Consequently, we set up a one-to-one mapping between the polarized channel and the polar subcode and introduce the weight distribution of the polar subcode, named polar spectrum, which is a set of the weight enumerator with the given Hamming weight. Compared with the Bhattacharyya parameter or other performance metrics needed iterative calculation, the union bound of the error probability of the polarized channel in term of polar spectrum has an intuitive interpretation, that is, the Hamming weight affects the pairwise error probability and the weight enumerator determines the number of the corresponding error events.

      By the polar spectrum, we first build the relevance between the error probability of polarized channel and the weight distribution of polar code, which is a simple analytical-metric rather than a complex iterative-calculation. Furthermore, based on the polar spectrum, the union bound and the union-Bhattacharyya (UB) bound of the block error rate (BLER) under the SC decoding is derived. Meanwhile, the upper/lower bound of the mutual information of the polarized channel is also derived by using the polar spectrum. Therefore, the framework based on polar spectrum can provide the same performance analysis as the recursive calculation based on Bhattacharyya parameter.
  \item Second, thanks to the good structure of (polar) subcodes, we design an iterative enumeration algorithm for the polar spectrum. We find that a pair of specific subcodes can constitute the dual codes. By using the well-known MacWilliams identities \cite{MacWilliams}, the weight distribution of these two subcodes can be easily calculated. Moreover, we prove that the polar weight enumerator of polar subcode and the weight enumerator of subcode satisfy the recursive accumulation relation. On this basis, an iterative algorithm embedded the solution of MacWilliams identities is proposed to enumerate the polar spectrum of polar codes. Compared with the traditional searching algorithm of distance spectrum, such enumeration of polar spectrum is a low complexity and high efficiency algorithm, since its one running can generate all the polar spectra for arbitrary code configuration with a fixed code length.
  \item Third, two explicit and analytical construction metrics, named union-Bhattacharyya weight (UBW) and simplified UB weight (SUBW), are proposed. They are the logarithmic version of the UB bound and the latter, SUBW, only considers the minimum weight term of the UB bound. Since the polar spectrum can be calculated off-line, these two constructions have a linear complexity far below those constructions based on the iterative calculation. On the other hand, they can also be modified as channel-independent constructions for the practical polar coding. Simulation results show that the polar codes constructed by these two metrics can achieve similar performance of those constructed by Tal-Vardy method, GA or PW algorithm under SC decoding. Dramatically, the former can outperform the latter under successive cancellation list (SCL) decoding.
\end{enumerate}

The remainder of the paper is organized as follows. Section \ref{section_II} presents the preliminaries of polar codes, covering polar coding and decoding, error performance analysis and polar code construction. Section \ref{section_III} investigates the error event of polarized channel and introduces the concepts of subcode, polar subcode and polar spectrum. Then, by using polar spectrum, the union bound and union-Bhattacharyya bound of the error probability of the polarized channel are derived and analyzed. In addition, the mutual information of the polarized channel is also analyzed based on the polar spectrum. Furthermore, we explore the subcode duality of polar codes and design an iterative enumeration algorithm to calculate the polar spectrum in Section \ref{section_V}. Numerical analysis for the union bound and UB bound in term of polar spectrum and simulation results for comparing UBW/SUBW construction with the traditional constructions are presented in Section \ref{section_VI}. Finally, Section \ref{section_VII} concludes the paper.

\section{Preliminary of Polar Codes}
\label{section_II}
\subsection{Notation Conventions}
In this paper, calligraphy letters, such as $\mathcal{X}$ and $\mathcal{Y}$, are mainly used to denote sets, and the cardinality of $\mathcal{X}$ is defined as $\left|\mathcal{X}\right|$. The Cartesian product of $\mathcal{X}$ and $\mathcal{Y}$ is written as $\mathcal{X}\times \mathcal{Y}$ and $\mathcal{X}^n$ denotes the $n$-th Cartesian power of $\mathcal{X}$. Especially, the hollow symbol, e.g. $\mathbb{C}$, denotes the codeword set of code or subcode. Let $\llbracket a, b \rrbracket$ denote the continuous integer set $\{a,a+1,\cdots,b\}$.

We write $v_1^N$ to denote an $N$-dimensional vector $\left(v_1,v_2,\cdots,v_N\right)$ and $v_i^j$ to denote a subvector $\left(v_i,v_{i+1},\cdots,v_{j-1},v_j\right)$ of $v_1^N$, $1\leq i,j \leq N$. Occasionally, we use the boldface lowercase letter, e.g. $\mathbf{u}$, to denote a vector. Further, given an index set $\mathcal{A}\subseteq \llbracket 1,N \rrbracket$ and its complement set $\mathcal{A}^c$, we write $v_\mathcal{A}$ and $v_{\mathcal{A}^c}$ to denote two complementary subvectors of $v_1^N$, which consist of $v_i$s with $i\in\mathcal{A}$ or $i\in\mathcal{A}^c$ respectively. Then $1_{\mathcal{A}}$ denotes the indicator function of a set $\mathcal{A}$, that is, $1_\mathcal{A}(x)$ equals $1$ if $x\in A$ and $0$ otherwise.

We use the boldface capital letter, such as $\mathbf{F}_N$, to denote a matrix with dimension $N$. So the notation $\mathbf{F}_N(a:N)$ indicates the submatrix consisting the rows from $a$ to $N$ of the matrix $\mathbf{F}_N$. We use $d_H(\mathbf{u},\mathbf{v})$ to denote the Hamming distance between the binary vector $\mathbf{u}$ and $\mathbf{v}$. Similarly, $w_H(\mathbf{u})$ denotes the Hamming weight of the binary vector $\mathbf{u}$. Given $\forall {\mathbf{a}}, {\mathbf{b}}\in \mathbb{R}^N$, let $\left\|\bf{a}-\bf{b}\right\|$ denote the Euclidian distance between the vector $\bf{a}$ and $\bf{b}$.

Throughout this paper, $\log\left(\cdot\right)$ means ``logarithm to base 2,'' and $\ln\left(\cdot\right)$ stands for the natural logarithm. $(\cdot)^T$ means the transpose operation of the vector. $(\mathbf{a}\cdot \mathbf{b})=\mathbf{a}\mathbf{b}^T$ is the inner product of two vectors $\mathbf{a}$ and $\mathbf{b}$. $\text{supp}(\mathbf{u})=\{i:u_i\neq0\}$ is the set of indices of non-zero elements of a vector $\mathbf{u}$. Let $\lceil x\rceil$ denote the ceiling function, that is, the least integer greater than or equal to $x$.

\subsection{Encoding and Decoding of Polar Codes}
Given a B-DMC $W: \mathcal{X}\to \mathcal{Y}$ with input alphabet $\mathcal{X}= \{0,1\}$ and output alphabet $\mathcal{Y}$, the channel transition probabilities can be defined as $W(y|x)$, $x\in \mathcal{X}$ and $y\in \mathcal{Y}$. If $W$ is symmetric, there is a permutation $\pi_1$ on $\mathcal{Y}$ such that $W(y|1)=W(\pi_1(y)|0)$ for all $y\in \mathcal{Y}$. Let $\pi_0$ be the identity permutation on $\mathcal{Y}$. We write $x\star y$ to denote $\pi_x(y)$, here $x\in \mathcal{X}$ and $y\in \mathcal{Y}$.

Then the symmetric capacity and the Bhattacharyya parameter of the B-DMC $W$ can be defined as
\begin{equation}\label{equation1}
 I(W) \triangleq \sum\limits_{y \in \mathcal{Y}} \sum\limits_{x \in \mathcal{X}} \frac{1}{2} W\left( y\left| x \right. \right)\log \frac{W( y\left| x \right.)}{\tfrac{1}{2}W(y\left| 0 \right.) + \tfrac{1}{2}W(y\left| 1 \right. )}
\end{equation}
and
\begin{equation}\label{equation2}
 Z( W ) \triangleq \sum\limits_{y \in \mathcal{Y}} {\sqrt {W( y|0)W(y|1)}}
\end{equation} respectively.

Applying channel polarization transform for $N=2^n$ independent uses of B-DMC $W$, after channel combining and splitting operation \cite{Polarcode_Arikan}, we obtain a group of polarized channels $W_N^{(i)}: \mathcal{X}\to \mathcal{Y} \times \mathcal{X}^{i-1}$, $i\in \llbracket 1,N \rrbracket$. By using the channel polarization, the polar coding can be described as follows.

Given the code length $N$, the information length $K$ and code rate $R=K/N$, the indices set of polarized channels can be divided into two subsets: one set $\mathcal{A}$, named information set, to carry information bits and the other complement set $\mathcal{A}^c$ to assign the fixed binary sequence, named frozen bits. A message block of $K=|\mathcal{A}|$ bits is transmitted over the $K$ most reliable channels $W_N^{(i)}$ with indices $i\in \mathcal{A}$ and the others are used to transmit the frozen bits.  So a binary source block $u_1^N$ consisting of $K$ information bits and $N-K$ frozen bits can be encoded into a codeword $x_1^N$ by
\begin{equation}\label{equation3}
x_1^N=u_1^N{\bf{F}}_N,
\end{equation}
where the matrix ${{\bf{F}}_N}$ is the $N$-dimension generator matrix. \footnote{In the seminal paper \cite{Polarcode_Arikan}, the generator matrix is composed of the matrix ${\bf{F}}_N$ and the bit-reversal matrix. Since the bit-reversal operation does not affect the reliability of the polarized channel, in this paper, we will use the matrix $\mathbf{F}_N$ as the generator matrix. This is also the polar coding form in 5G standard \cite{5GNR_38212}.} This matrix can be recursively defined as ${{\bf{F}}_N} = {\bf{F}}_2^{ \otimes n}$, where ``$^{\otimes n}$'' denotes the $n$-th Kronecker product and ${{\bf{F}}_2} = \left[ { \begin{smallmatrix} 1 & 0 \\ 1 &  1 \end{smallmatrix} } \right]$ is the $2\times2$ kernel matrix.

Such polar coding can deduce the product-form channel $W^N$ and the synthetic channel $W_N$. The transition probabilities of these two channels satisfy $W_N\left( {y_1^N\left| {u_1^N} \right.} \right) = {W^N}\left( {y_1^N\left| {u_1^N{{\mathbf{F}}_N}} \right.} \right) = \prod\nolimits_{i = 1}^N {W\left( {{y_i}\left| {{x_i}} \right.} \right)}$. So the transition probability of the $i$-th polarized channel $W_N^{(i)}$ is defined as
\begin{equation}\label{equation4}
W_N^{(i)}\left( y_1^N,u_1^{i-1}\left| u_i \right. \right) \triangleq \sum\limits_{u_{i+1}^N \in {\mathcal{X}^{N-i}}} {\frac{1}{2^{N-1}} W_N\left( y_1^N\left| u_1^N \right. \right)}.
\end{equation}
The Bhattacharyya parameter of $W_N^{(i)}$ is expressed as
\begin{equation}\label{equation5}
\begin{aligned}
  &Z\left( {W_N^{\left( i \right)}} \right) \hfill \\
  &= \sum\limits_{y_1^N \in {\mathcal{Y}^N}} {\sum\limits_{u_1^{i - 1} \in {\mathcal{X}^{i - 1}}} {\sqrt {W_N^{\left( i \right)}\left( {y_1^N,u_1^{i - 1}\left| 0 \right.} \right)W_N^{\left( i \right)}\left( {y_1^N,u_1^{i - 1}\left| 1 \right.} \right)} } }.  \hfill \\
\end{aligned}
\end{equation}

As proposed in \cite{Polarcode_Arikan}, polar codes can be decoded by the SC decoding algorithm with a low complexity $O(N\log N)$. Furthermore, many improved SC decoding algorithms, such as successive cancellation list (SCL) \cite{SCL_Tal}, successive cancellation stack (SCS) \cite{SCS_Niu}, successive cancellation hybrid (SCH) \cite{SCH_Chen}, successive cancellation priority (SCP) \cite{SCH_Guan}, and CRC aided (CA)-SCL/SCS \cite{SCL_Tal,CASCL_Niu,ASCL_Li,Survey_Niu} decoding can be applied to improve the performance of polar codes.

\subsection{Error Performance of Polar Codes}
Based on the joint probabilities $P\left( {\left\{ {\left( {u_1^N,y_1^N} \right)} \right\}} \right) = {2^{ - N}}{W_N}\left( {y_1^N\left| {u_1^N} \right.} \right)$, we introduce the probability space $\left(\mathcal{X}^N \times \mathcal{Y}^N,P \right)$, where all $\left( {u_1^N,y_1^N} \right) \in \mathcal{X}^N \times \mathcal{Y}^N$. On this probability space, the random vectors $U_1^N$, $X_1^N$, $Y_1^N$, and $\hat {U}_1^N$ represent the input to the channel $W_N$, the input to the channel $W^N$, the output of $W^N$ or $W_N$, and the decisions by the decoder. Given a sample point $\left( {u_1^N,y_1^N} \right) \in \mathcal{X}^N \times \mathcal{Y}^N$, we have $U_1^N\left(u_1^N,y_1^N\right)=u_1^N$, $X_1^N\left(u_1^N,y_1^N\right)=u_1^N {\bf{F}}_N$, $Y_1^N\left(u_1^N,y_1^N\right)=y_1^N$, and $\hat {U}_1^N\left(u_1^N,y_1^N\right)$ is recursively determined by using SC decoding.

Given the fixed configuration $(N,K,\mathcal{A})$, the block error event under SC decoding is defined as
\begin{equation}
\mathcal{E} \triangleq \left\{ {\left( {u_1^N,y_1^N} \right) \in {\mathcal{X}^N} \times {\mathcal{Y}^N}:{{\hat U}_\mathcal{A}}\left( {u_1^N,y_1^N} \right) \ne {u_\mathcal{A}}} \right\}.
\end{equation}
So the block error rate can be expressed in this probability space as $P_e(N,K,\mathcal{A})=P(\mathcal{E})$. As the derivation in \cite{Polarcode_Arikan}, the set of block error event can be enlarged as $\mathcal{E} \subset \mathop  \bigcup \limits_{i \in \mathcal{A}} {\mathcal{E}_i}$, where the single-bit error event $\mathcal{E}_i$ is defined as
\begin{equation}\label{single_bit_error}
\begin{aligned}
\mathcal{E}_i \triangleq &\left\{\left(u_1^N,y_1^N\right)\in \mathcal{X}^N\times \mathcal{Y}^N: W_N^{(i)}\left(y_1^N,u_1^{i-1}\left|u_i\right.\right)\right.\\
&\left. \leq W_N^{(i)}\left(y_1^N,u_1^{i-1}\left|u_i\oplus 1\right.\right) \right\}.
\end{aligned}
\end{equation}

Thus, the error probability of single-bit error event can be bounded as $P(\mathcal{E}_i)\leq Z\left(W_N^{(i)}\right)$. Further, the BLER under SC decoding can be upper bounded as
\begin{equation}\label{equation8}
P_e(N,K,\mathcal{A})=P(\mathcal{E})\leq \sum \limits_{i\in \mathcal{A}} P(\mathcal{E}_i) \leq \sum \limits_{i\in \mathcal{A}} Z\left(W_N^{(i)}\right).
\end{equation}

\subsection{Polar Code Construction}
For the construction of polar codes, the calculation of channel reliabilities and selection of good channels are the critical steps. Generally, the construction algorithms can be divided into two categories, namely, the channel-dependent (implicit) construction and the explicit construction.

For the former, Ar{\i}kan initially proposed the construction based on the Bhattacharyya parameter \cite{Polarcode_Arikan}, whereas this method is only precise for the BEC and approximate for other channels. Lately, density evolution (DE) algorithm \cite{DE_Mori} and Tal-Vardy algorithm \cite{Tal_Vardy} were proposed to perform high-precise construction of polar codes. However, these algorithms have a slightly high complexity. The Gaussian approximation (GA) algorithm \cite{GA_Trifonov} is a desirable method to construct the polar codes with a medium complexity $O(N\log N)$, especially for the AWGN channel. Further, an improved GA algorithm \cite{IGA_Dai} was designed for the long code length. Nevertheless, these channel-dependent constructions are not convenient for the practical application.

On the contrary, since the explicit construction is independent with the channel condition, it is more desirable for the practical design of polar code. For an example, polarized weight (PW) construction \cite{PW_He} is a typical method in this category. Furthermore, due to the usage of constructive property of the generator matrix, the construction based on partial order \cite{PO_Schurch} is also a good method. However, these methods are still heuristic and do not fully explore the algebraic construction of polar code.

In this paper, we will re-establish the analysis framework of the error performance by investigating the distance spectrum of polar codes so as to obtain analytical construction metrics.
\section{Performance Analysis based on Polar Spectrum}
\label{section_III}
In this section, we will introduce a new analysis tool, namely polar spectrum, to analyze the error performance of polar codes under the SC decoding. First we investigate the error probability of single-bit error event and introduce the concepts of (polar) subcode and polar spectrum. Then we derive the new upper bounds of BLER based on polar spectrum. Further, as an enlarged version of BLER bound, the union-Bhattacharyya bound is also derived. Finally, the upper/lower bound of the mutual information of the polarized channel is analyzed by using this new tool.
\subsection{Error Event Probability Analysis}
Recall that the single-bit error event $\mathcal{E}_i$ is associated with the $i$-th polarized channel, the probability of this error event can be upper bounded by summing those of many codeword error events.
\begin{theorem}\label{theorem1}
Given $\forall u_{i+1}^N \in {\mathcal{X}^{N-i}}$ and suppose the transmission bit is zero, that is, $u_i=0$, then the error probability of the single-bit error event $\mathcal{E}_i$ can be enlarged as
\begin{equation}
\begin{aligned}
P\left( {{{\cal E}_i}} \right) \le \sum\limits_{u_{i + 1}^N\in {{\cal X}^{N - i}}} &P\Big( {W_N}\left( {y_1^N\left| {0_1^N} \right.} \right) \\
& \le {W_N}\left( {y_1^N\left| {0_1^{i - 1},1,u_{i + 1}^N} \right.} \right) \Big).
\end{aligned}
\end{equation}
\end{theorem}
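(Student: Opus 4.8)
The plan is to fix the transmitted word to be all-zero using channel symmetry, to rewrite the single-bit error condition in terms of the product channel $W_N$ via \eqref{equation4}, and then to bound $\mathcal{E}_i$ by a union of pairwise error events against the coset $\{(0_1^{i-1},1,u_{i+1}^N)\mathbf{F}_N : u_{i+1}^N\in\mathcal{X}^{N-i}\}$ (the set that, one step later, becomes the ``polar subcode'').

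First I would use symmetry of $W$. For every $u_1^N$ the map $y_1^N\mapsto (u_1^N\mathbf{F}_N)\star y_1^N$ is a bijection of $\mathcal{Y}^N$ that sends $W_N(\cdot\,|\,u_1^N)$ to $W_N(\cdot\,|\,0_1^N)$ and, by a short computation from \eqref{equation4}, sends the event $\{W_N^{(i)}(y_1^N,u_1^{i-1}|u_i)\le W_N^{(i)}(y_1^N,u_1^{i-1}|u_i\oplus 1)\}$ onto $\{W_N^{(i)}(y_1^N,0_1^{i-1}|0)\le W_N^{(i)}(y_1^N,0_1^{i-1}|1)\}$. Combined with the standard fact \cite{Polarcode_Arikan} that $P(\mathcal{E}_i)$ is independent of the frozen bits, this gives $P(\mathcal{E}_i)=\sum_{y_1^N}W_N(y_1^N|0_1^N)\,\mathbf{1}\!\left[W_N^{(i)}(y_1^N,0_1^{i-1}|0)\le W_N^{(i)}(y_1^N,0_1^{i-1}|1)\right]$, so the probabilities on the right of the statement are to be read under all-zero transmission.

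Next I would expand both sides of the inequality inside the indicator with \eqref{equation4} and cancel the common factor $2^{-(N-1)}$, turning the single-bit error condition into $\sum_{u_{i+1}^N}W_N(y_1^N|0_1^{i-1},0,u_{i+1}^N)\le\sum_{u_{i+1}^N}W_N(y_1^N|0_1^{i-1},1,u_{i+1}^N)$. Since all summands are nonnegative and the summand with $u_{i+1}^N=0_{i+1}^N$ on the left equals $W_N(y_1^N|0_1^N)$, on $\mathcal{E}_i$ we obtain $W_N(y_1^N|0_1^N)\le\sum_{u_{i+1}^N}W_N(y_1^N|0_1^{i-1},1,u_{i+1}^N)$; a union bound over $u_{i+1}^N$ then bounds $\mathbf{1}[\mathcal{E}_i]$ by $\sum_{u_{i+1}^N}\mathbf{1}\!\left[W_N(y_1^N|0_1^N)\le W_N(y_1^N|0_1^{i-1},1,u_{i+1}^N)\right]$, and averaging against $W_N(y_1^N|0_1^N)$ over $y_1^N$ gives the claim, term by term in $u_{i+1}^N$.

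I expect that last union bound to be the crux. Pointwise, ``a sum of nonnegative terms exceeds $W_N(y_1^N|0_1^N)$'' does not force any single term to exceed it — the pigeonhole version only yields an extra factor $|\mathcal{X}^{N-i}|$ in the comparison — so a rigorous argument has to be finer: for instance one can keep the \emph{matched} comparison $\sum_{u_{i+1}^N}a(u_{i+1}^N)\le\sum_{u_{i+1}^N}b(u_{i+1}^N)$, which with $a,b\ge 0$ does imply $a(u_{i+1}^N)\le b(u_{i+1}^N)$ for at least one $u_{i+1}^N$ and hence the valid containment $\mathcal{E}_i\subseteq\bigcup_{u_{i+1}^N}\{W_N(y_1^N|0_1^{i-1},0,u_{i+1}^N)\le W_N(y_1^N|0_1^{i-1},1,u_{i+1}^N)\}$, and then transport each of those events onto the all-zero reference using a further symmetry shift together with the invariance of pairwise error probability under a common codeword translation (the same fact that makes it depend only on the Hamming weight of the codeword difference, which the paper will exploit to define the polar spectrum); getting this step fully rigorous without losing the constant is where the actual work lies. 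Everything else — the symmetry reduction, the expansion via \eqref{equation4}, and the minor bookkeeping for output symbols at which a transition probability vanishes — is mechanical.
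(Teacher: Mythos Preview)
Your route is the paper's route: invoke channel symmetry (the paper cites Proposition~13 of \cite{Polarcode_Arikan}) to replace the transmitted word by $0_1^N$, expand $W_N^{(i)}$ through \eqref{equation4}, and then pass to a union of pairwise events indexed by $u_{i+1}^N$. The only substantive difference is that you are more scrupulous than the paper at the last step: where the paper simply writes
\[
P(\mathcal{E}_i)=P\Big(\bigcup_{u_{i+1}^N}\{W_N(y_1^N|0_1^N)\le W_N(y_1^N|0_1^{i-1},1,u_{i+1}^N)\}\Big)
\]
and applies the union bound, you correctly observe that this equality is not a pointwise containment (``a sum dominates $W_N(y_1^N|0_1^N)$'' does not force any single summand to), and you propose instead the matched pigeonhole $\sum_w a(w)\le\sum_w b(w)\Rightarrow\exists w:a(w)\le b(w)$ followed by a symmetry re-shift.

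One caution about your fix: the matched comparison yields the containment $\mathcal{E}_i\subseteq\bigcup_w\{W_N(y_1^N|0_1^{i-1},0,w)\le W_N(y_1^N|0_1^{i-1},1,w)\}$, but under the all-zero reference measure $W_N(\cdot\,|0_1^N)$ these are \emph{not} pairwise error probabilities (the ``transmitted'' word in the measure and the ``transmitted'' word in the likelihood comparison disagree). The symmetry shift you invoke maps the comparison to the all-zero form but simultaneously moves the measure to $W_N(\cdot\,|0_1^{i-1},0,w)$, so you do not land directly on the terms in the statement. The clean way around this is to run the matched comparison \emph{before} fixing $U_1^N=0_1^N$, i.e.\ on the joint space where $U_1^N$ is uniform, and only then apply symmetry term-by-term; this exploits that the indicator of $\mathcal{E}_i$ does not depend on $U_{i+1}^N$. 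You have, in any case, correctly isolated the only nontrivial step, and the paper does not spell it out either.
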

\begin{proof}
By using Proposition 13 in \cite{Polarcode_Arikan}, due to the symmetry of B-DMC $W$, the transition probabilities of synthetic channel $W_N$ and polarized channel $W_N^{(i)}$ can be written as
\begin{equation}\nonumber
\begin{aligned}
&W_N\left(\widetilde{y}_1^N\left|\widetilde{u}_1^{N}\right.\right)\\
&=W_N\left(a_1^N\mathbf{F}_N \star \widetilde{y}_1^N \left|\widetilde{u}_1^{N}\oplus a_1^{N}\right.\right)
\end{aligned}
\end{equation}
and
\begin{equation}\nonumber
\begin{aligned}
&W_N^{(i)}\left(\widetilde{y}_1^N,\widetilde{u}_1^{i-1}\left|\widetilde{u}_i\right.\right)\\
&=W_N^{(i)}\left(a_1^N\mathbf{F}_N \star \widetilde{y}_1^N, \widetilde{u}_1^{i-1}\oplus a_1^{i-1}\left|\widetilde{u}_i\oplus a_i\right.\right)
\end{aligned}
\end{equation} respectively.

Let $a_1^{N}=\widetilde{u}_1^{N}$ and $a_1^N\mathbf{F}_N \star \widetilde{y}_1^N=y_1^N$, we have
\begin{equation}\nonumber
P\left(\mathcal{E}_i\right) =P\left(W_N^{(i)}\left(y_1^N,0_1^{i-1}\left|0\right.\right) \leq W_N^{(i)}\left(y_1^N,0_1^{i-1}\left| 1\right.\right) \right).
\end{equation}

Extending the transition probabilities of polarized channel $W_N^{(i)}$ with Eq. (\ref{equation4}), we have
\begin{equation}\nonumber
\begin{aligned}
P\left(\mathcal{E}_i\right)=
P& \left( \sum\limits_{\widetilde{u}_{i+1}^N \in {\mathcal{X}^{N-i}}} W_N\left( y_1^N\left| 0_1^{i-1},0,\widetilde{u}_{i+1}^N\oplus a_{i+1}^N \right. \right) \right.\\
&\left.\leq \sum\limits_{\widetilde{v}_{i+1}^N \in {\mathcal{X}^{N-i}}} W_N\left( y_1^N\left| 0_1^{i-1},1,\widetilde{v}_{i+1}^N \oplus a_{i+1}^N \right. \right) \right).
\end{aligned}
\end{equation}
Let $\widetilde{v}_{i+1}^N \oplus a_{i+1}^N=u_{i+1}^N$, it follows that
\begin{equation}
\begin{aligned}
P\left(\mathcal{E}_i\right)=
P&\left(\bigcup\limits_{u_{i+1}^N \in {\mathcal{X}^{N-i}}} \Bigg\{ W_N\left( y_1^N\left| 0_1^N \right. \right) \right.\\
&\leq W_N\left( y_1^N\left| 0_1^{i-1},1,u_{i+1}^N \right. \right) \Bigg\} \Bigg).
\end{aligned}
\end{equation}
Using the property of the union set, we complete the proof.
\end{proof}

Concerning the derivation of Theorem \ref{theorem1}, we find that the error event $\mathcal{E}_i$ is associated with the codewords transmitted over the polarized channel $W_N^{(i)}$. Thus we introduce the following definitions.
\begin{definition}\label{definition1}
Given the code length $N$, the $i$-th subcode $\mathbb{C}_N^{(i)}$ is defined as a set of codewords, that is,
\begin{equation}
\mathbb{C}_N^{(i)}\triangleq \left\{{\bf{c}}:{\bf{c}}=\left(0_1^{(i-1)},u_i^N\right){\bf{F}}_N,\forall u_i^N \in \mathcal{X}^{N-i+1} \right\}.
\end{equation}
Furthermore, one subset of the subcode $\mathbb{C}_N^{(i)}$, namely the polar subcode $\mathbb{D}_N^{(i)}$, can be defined as
\begin{equation}
\mathbb{D}_N^{(i)}\triangleq \left\{{\bf{c}}^{(1)}:{\bf{c}}^{(1)}=\left(0_1^{(i-1)},1,u_{i+1}^N\right){\bf{F}}_N,\forall u_{i+1}^N \in \mathcal{X}^{N-i} \right\}.
\end{equation}
The complement set of the polar subcode is defined as $\mathbb{E}_N^{(i)}=\mathbb{C}_N^{(i)}-\mathbb{D}_N^{(i)}\triangleq\left\{{\bf{c}}^{(0)}\right\}$.
\end{definition}

Obviously, subcode $\mathbb{C}_N^{(i)}$ is a linear block code $(N,N-i+1)$ and we have $\left|\mathbb{C}_N^{(i)}\right|=2^{N-i+1}$ and $\left|\mathbb{D}_N^{(i)}\right|=2^{N-i}$. Let us investigate the Hamming distance between the codeword of $\mathbb{D}_N^{(i)}$ and that of $\mathbb{E}_N^{(i)}$.
\begin{proposition}\label{proposition2}
For any ${\bf{c}}^{(1)}\in \mathbb{D}_N^{(i)}$ and ${\bf{c}}^{(0)}\in \mathbb{E}_N^{(i)}$, the Hamming distance between these two codewords satisfies $d_H\left({\bf{c}}^{(1)}, {\bf{c}}^{(0)}\right)=w_H\left({\bf{c}}^{(1)}\oplus {\bf{c}}^{(0)}\right)$ and the module-2 sum of these two codewords is belong to the given polar subcode, that is, ${\bf{c}}^{(1)}\oplus {\bf{c}}^{(0)} \in \mathbb{D}_N^{(i)}$.
\end{proposition}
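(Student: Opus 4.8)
The plan is to exploit the linearity of the ambient code together with the explicit description of $\mathbb{C}_N^{(i)}$, $\mathbb{D}_N^{(i)}$, and $\mathbb{E}_N^{(i)}$ in terms of the generator matrix $\mathbf{F}_N$. First I would write $\mathbf{c}^{(1)}=\left(0_1^{(i-1)},1,u_{i+1}^N\right)\mathbf{F}_N$ for some $u_{i+1}^N\in\mathcal{X}^{N-i}$ and $\mathbf{c}^{(0)}=\left(0_1^{(i-1)},0,v_{i+1}^N\right)\mathbf{F}_N$ for some $v_{i+1}^N\in\mathcal{X}^{N-i}$, using that every element of $\mathbb{E}_N^{(i)}=\mathbb{C}_N^{(i)}-\mathbb{D}_N^{(i)}$ has a $0$ in coordinate $i$ of its message vector (this is the characterization implicit in Definition \ref{definition1}: a codeword of $\mathbb{C}_N^{(i)}$ lies in $\mathbb{D}_N^{(i)}$ iff the $i$-th message coordinate is $1$, and in $\mathbb{E}_N^{(i)}$ iff it is $0$). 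Then by linearity of the map $u_1^N\mapsto u_1^N\mathbf{F}_N$,
\begin{equation}\nonumber
\mathbf{c}^{(1)}\oplus\mathbf{c}^{(0)}=\left(0_1^{(i-1)},1,u_{i+1}^N\oplus v_{i+1}^N\right)\mathbf{F}_N,
\end{equation}
and since $u_{i+1}^N\oplus v_{i+1}^N$ ranges over $\mathcal{X}^{N-i}$, this is by definition an element of $\mathbb{D}_N^{(i)}$, which establishes the second claim.

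For the first claim, $d_H\left(\mathbf{c}^{(1)},\mathbf{c}^{(0)}\right)=w_H\left(\mathbf{c}^{(1)}\oplus\mathbf{c}^{(0)}\right)$ is simply the standard identity that Hamming distance between two binary vectors equals the Hamming weight of their modulo-$2$ sum; it holds coordinatewise because $\mathbf{c}^{(1)}_j\neq\mathbf{c}^{(0)}_j$ in $\mathbb{F}_2$ exactly when $\mathbf{c}^{(1)}_j\oplus\mathbf{c}^{(0)}_j=1$. Combining the two parts, the distance from any codeword of $\mathbb{D}_N^{(i)}$ to any codeword of $\mathbb{E}_N^{(i)}$ is the weight of some codeword of $\mathbb{D}_N^{(i)}$, which is the content of the proposition and is the fact used to recast the error-event bound of Theorem \ref{theorem1} as a statement about the weight distribution (polar spectrum) of $\mathbb{D}_N^{(i)}$.

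The only subtlety — and the step I would state most carefully — is the claim that the message vectors underlying $\mathbb{E}_N^{(i)}$ are exactly those of the form $\left(0_1^{(i-1)},0,v_{i+1}^N\right)$; one must check that the encoding map restricted to message vectors $\left(0_1^{(i-1)},u_i^N\right)$ is injective (equivalently, that the relevant rows $i,\dots,N$ of $\mathbf{F}_N$ are linearly independent, which is immediate since $\mathbf{F}_N$ is invertible, being a Kronecker power of the invertible kernel $\mathbf{F}_2$). This injectivity guarantees that $\mathbb{D}_N^{(i)}$ and $\mathbb{E}_N^{(i)}$ are genuinely disjoint and that the ``$u_i=0$ vs.\ $u_i=1$'' dichotomy on message vectors transfers faithfully to codewords, so no codeword is double-counted. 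Everything else is routine $\mathbb{F}_2$-linear algebra and the definition of Hamming distance.
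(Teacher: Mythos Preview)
Your proposal is correct and follows essentially the same approach as the paper: write $\mathbf{c}^{(1)}$ and $\mathbf{c}^{(0)}$ explicitly via their message vectors, use linearity of $\mathbf{F}_N$ to obtain $\mathbf{c}^{(1)}\oplus\mathbf{c}^{(0)}=\left(0_1^{(i-1)},1,u_{i+1}^N\oplus v_{i+1}^N\right)\mathbf{F}_N\in\mathbb{D}_N^{(i)}$, and invoke the standard identity $d_H(\cdot,\cdot)=w_H(\cdot\oplus\cdot)$. Your added remark on the injectivity of the encoding map (via invertibility of $\mathbf{F}_N$) is a welcome piece of care that the paper leaves implicit.
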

\begin{proof}
By Definition \ref{definition1}, we have
\begin{equation}\nonumber
\begin{aligned}
&d_H\left({\bf{c}}^{(1)}, {\bf{c}}^{(0)}\right)\\
&=w_H\left(\left(\left(0_1^{(i-1)},1,u_{i+1}^N\right)\oplus \left(0_1^{(i-1)},0,v_{i+1}^N\right)\right) {\bf{F}}_N\right)\\
&=w_H\left(\left(0_1^{(i-1)},1,u_{i+1}^N\oplus v_{i+1}^N\right) {\bf{F}}_N\right).
\end{aligned}
\end{equation} Obviously, ${\bf{c}}^{(1)}\oplus {\bf{c}}^{(0)} \in \mathbb{D}_N^{(i)}$.
\end{proof}

Based on Definition \ref{definition1}, we can further define the pairwise-codeword error event as follows.

\begin{definition}\label{definition2}
Given the codewords ${\bf{c}}^{(0)} \in \mathbb{E}_N^{(i)}$ and ${\bf{c}}^{(1)} \in \mathbb{D}_N^{(i)}$, the pairwise-codeword error event is defined as
\begin{equation}
\begin{aligned}
\mathcal{D}_i\left({\bf{c}}^{(0)}\to {\bf{c}}^{(1)}\right)\triangleq &\left\{\left({\bf{c}}^{(0)},{\bf{c}}^{(1)}, y_1^N\right): W^N\left(y_1^N\left|{\bf{c}}^{(0)}\right.\right)\right. \\
&\left.\leq W^N\left(y_1^N\left|{\bf{c}}^{(1)}\right.\right)\right\}.
\end{aligned}
\end{equation}
\end{definition}
The error probability of the event $\mathcal{D}_i\left({\bf{c}}^{(0)}\to {\bf{c}}^{(1)}\right)$ is named as the pairwise error probability (PEP) and defined as
\begin{equation}\label{equation14}
\begin{aligned}
&P_N^{(i)}\left({\bf{c}}^{(0)}\to {\bf{c}}^{(1)}\right)=P\left( \mathcal{D}_i\left({\bf{c}}^{(0)}\to {\bf{c}}^{(1)}\right) \right)\\
&\triangleq \sum\limits_{{y_1^N}} W^N\left(y_1^N\left|\bf{c}^{(0)}\right.\right) 1_{\mathcal{D}_i\left({\bf{c}}^{(0)}\to {\bf{c}}^{(1)}\right)} \left({\bf{c}}^{(0)}, {\bf{c}}^{(1)},y_1^N\right).
\end{aligned}
\end{equation}

From Proposition \ref{proposition2}, we can simplify the PEP as below,
\begin{equation}\label{equation18}
\begin{aligned}
&P_N^{(i)}\left({\bf{c}}^{(0)}\to {\bf{c}}^{(1)}\right)\\
&=P\left(W^N\left(y_1^N\left|{\bf{c}}^{(0)}\right.\right)  \leq W^N\left(y_1^N\left|{\bf{c}}^{(1)}\right.\right)\right)\\
&=P\left(W^N\left(y_1^N\left|0_1^N \right.\right)  \leq W^N\left(y_1^N\left|{\bf{c}}^{(0)}\oplus{\bf{c}}^{(1)}\right.\right)\right)\\
&=P_N^{(i)}\left(d_H\left({\bf{c}}^{(0)},{\bf{c}}^{(1)}\right)\right).
\end{aligned}
\end{equation}
Without loss of generality, hereafter we designate ${\bf{c}}^{(0)}=0_1^N$. It follows that PEP $P_N^{(i)}\left({\bf{c}}^{(0)}\to {\bf{c}}^{(1)}\right)$ is determined by the codeword weight of polar subcode $\mathbb{D}_N^{(i)}$. 

So we can derive the upper bound of the error probability of the polarized channel $W_N^{(i)}$ as follows.
\begin{theorem}\label{theorem2}
The error probability of $W_N^{(i)}$ is upper bounded by
\begin{equation}
P\left(W_N^{(i)}\right)\leq  \sum\limits_{{\bf{c}}^{(1)}} P_N^{(i)}\left(d_H\left({\bf{c}}^{(0)},{\bf{c}}^{(1)}\right)\right).
\end{equation}
\end{theorem}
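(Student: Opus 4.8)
The plan is to chain together the three results already established in this section: Theorem~\ref{theorem1}, which rewrites $P(\mathcal{E}_i)$ as the probability of a union of codeword-comparison events, Definition~\ref{definition1}, which identifies exactly which codewords appear, and Equation~(\ref{equation18}), which reduces each comparison to a pairwise error probability depending only on a Hamming distance. First I would start from the conclusion of Theorem~\ref{theorem1}, namely
\begin{equation}\nonumber
P\left(\mathcal{E}_i\right)\leq \sum\limits_{u_{i+1}^N\in\mathcal{X}^{N-i}} P\Big(W_N\left(y_1^N\left|0_1^N\right.\right)\leq W_N\left(y_1^N\left|0_1^{i-1},1,u_{i+1}^N\right.\right)\Big),
\end{equation}
and observe that by the identity $W_N(y_1^N|u_1^N)=W^N(y_1^N|u_1^N\mathbf{F}_N)$ the left argument $0_1^N$ maps to the all-zero codeword, which is exactly the representative ${\bf{c}}^{(0)}=0_1^N$ fixed just before the theorem, while the right argument $(0_1^{i-1},1,u_{i+1}^N)$ maps precisely to a generic codeword ${\bf{c}}^{(1)}=(0_1^{i-1},1,u_{i+1}^N)\mathbf{F}_N\in\mathbb{D}_N^{(i)}$ as $u_{i+1}^N$ ranges over $\mathcal{X}^{N-i}$.

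Next I would note that this correspondence $u_{i+1}^N\mapsto{\bf{c}}^{(1)}$ is a bijection between $\mathcal{X}^{N-i}$ and $\mathbb{D}_N^{(i)}$, since $\mathbf{F}_N$ is invertible and the first $i-1$ coordinates and the $i$-th coordinate of the source vector are pinned; hence the sum over $u_{i+1}^N$ can be re-indexed as a sum over ${\bf{c}}^{(1)}\in\mathbb{D}_N^{(i)}$. Then each summand is, by Definition~\ref{definition2} with ${\bf{c}}^{(0)}=0_1^N$, exactly the event $\mathcal{D}_i({\bf{c}}^{(0)}\to{\bf{c}}^{(1)})$, so its probability is the PEP $P_N^{(i)}({\bf{c}}^{(0)}\to{\bf{c}}^{(1)})$, which by the chain of equalities in Equation~(\ref{equation18}) equals $P_N^{(i)}\left(d_H\left({\bf{c}}^{(0)},{\bf{c}}^{(1)}\right)\right)$. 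Writing $P(W_N^{(i)})$ for $P(\mathcal{E}_i)$ and substituting gives the claimed bound
\begin{equation}\nonumber
P\left(W_N^{(i)}\right)\leq\sum\limits_{{\bf{c}}^{(1)}}P_N^{(i)}\left(d_H\left({\bf{c}}^{(0)},{\bf{c}}^{(1)}\right)\right),
\end{equation}
where the sum ranges over ${\bf{c}}^{(1)}\in\mathbb{D}_N^{(i)}$.

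There is essentially no hard analytic step here — the heavy lifting (the symmetry argument reducing to the all-zero transmitted vector, and the union-bound expansion) was already done inside Theorem~\ref{theorem1}. The one point that deserves a sentence of care is making the bijection between source suffixes $u_{i+1}^N$ and codewords of $\mathbb{D}_N^{(i)}$ explicit, so that the re-indexing of the sum is legitimate and no codeword of $\mathbb{D}_N^{(i)}$ is double-counted or omitted; this follows from invertibility of $\mathbf{F}_N$. If one wanted to be fully rigorous about the union bound, one would also remark that Theorem~\ref{theorem1} already absorbed the step $P(\bigcup_k A_k)\leq\sum_k P(A_k)$, so Theorem~\ref{theorem2} is really just a restatement in the language of subcodes and PEP; the proof is therefore short, and its value is notational — it sets up the polar-spectrum formulation used in the sequel.
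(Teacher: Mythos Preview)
Your proposal is correct and follows essentially the same route as the paper: invoke Theorem~\ref{theorem1}, re-index the sum over suffixes $u_{i+1}^N$ as a sum over codewords ${\bf c}^{(1)}\in\mathbb{D}_N^{(i)}$ via Definition~\ref{definition2}, and then apply the reduction in Equation~(\ref{equation18}). Your write-up is in fact more careful than the paper's in making the bijection $u_{i+1}^N\leftrightarrow{\bf c}^{(1)}$ explicit, but the argument is the same.
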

\begin{proof}
According to Theorem \ref{theorem1} and Definition \ref{definition2}, we have
\begin{equation}\label{equation17}
\begin{aligned}
P\left(W_N^{(i)}\right)&=P\left(\mathcal{E}_i\right) 
                                \leq P\left(\bigcup \limits_{{\bf{c}}^{(1)}} \mathcal{D}_i\left({\bf{c}}^{(0)}\to {\bf{c}}^{(1)}\right)\right)\\
                                &= \sum \limits_{{\bf{c}}^{(1)}} P\left( \mathcal{D}_i\left({\bf{c}}^{(0)}\to {\bf{c}}^{(1)}\right) \right).
\end{aligned}\\
\end{equation}
Substituting Eq. (\ref{equation18}) into Eq. (\ref{equation17}), we complete the proof.
\end{proof}
Hereafter, $P_N^{(i)}\left(d_H\left({\bf{c}}^{(0)},{\bf{c}}^{(1)}\right)\right)$ is shortly written as $P_N^{(i)}(d)$.

From the above analysis, we establish the 1-1 mapping among the single-bit error event, the polarized channel, and the subcode, that is, $\left\{\mathcal{E}_i \leftrightarrow W_N^{(i)} \leftrightarrow {\mathbb{C}}_N^{(i)}\right\}$. Next we will further analyze the block error rate based on the distance spectrum of polar subcode.
\subsection{Block Error Rate Bound Based on Polar Spectrum}

\begin{definition}\label{definition3}
The polar spectrum of the polar subcode $\mathbb{D}_N^{(i)}$, also named as polar weight distribution, is defined as the weight distribution set $\left\{A_N^{(i)}(d)\right\},d\in \llbracket 1, N \rrbracket $, where $d$ is the Hamming weight of non-zero codeword and the polar weight enumerator $A_N^{(i)}(d)$ enumerates the codewords of weight $d$ for codebook $\mathbb{D}_N^{(i)}$.
\end{definition}

\begin{proposition}\label{proposition3}
The error probability of $W_N^{(i)}$ is further upper bounded by
\begin{equation}\label{equation19}
P\left(W_N^{(i)}\right)\leq \sum \limits_{d=1}^N A_N^{(i)}(d) P_N^{(i)}(d).
\end{equation}
\end{proposition}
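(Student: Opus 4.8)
The plan is to start directly from Theorem \ref{theorem2}, which already bounds $P(W_N^{(i)})$ by $\sum_{\mathbf{c}^{(1)}} P_N^{(i)}(d_H(\mathbf{c}^{(0)},\mathbf{c}^{(1)}))$, the sum running over all codewords $\mathbf{c}^{(1)}\in\mathbb{D}_N^{(i)}$, and to convert this codeword-indexed sum into a weight-indexed sum. First I would invoke the normalization $\mathbf{c}^{(0)}=0_1^N$ adopted just before Theorem \ref{theorem2}, so that $d_H(\mathbf{c}^{(0)},\mathbf{c}^{(1)})=w_H(\mathbf{c}^{(1)})$ for every $\mathbf{c}^{(1)}\in\mathbb{D}_N^{(i)}$; by Eq. (\ref{equation18}) the pairwise error probability $P_N^{(i)}(\mathbf{c}^{(0)}\to\mathbf{c}^{(1)})$ depends on $\mathbf{c}^{(1)}$ only through this weight, and we abbreviate it $P_N^{(i)}(d)$ with $d=w_H(\mathbf{c}^{(1)})$.

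Next I would partition the index set of the sum as the disjoint union $\mathbb{D}_N^{(i)}=\bigcup_{d=1}^{N}\{\mathbf{c}^{(1)}\in\mathbb{D}_N^{(i)}:w_H(\mathbf{c}^{(1)})=d\}$. The lower limit is $d=1$ rather than $d=0$ because the input vector $(0_1^{i-1},1,u_{i+1}^N)$ is always nonzero and $\mathbf{F}_N$ is invertible, so $\mathbb{D}_N^{(i)}$ contains no zero codeword. By Definition \ref{definition3}, the cardinality of the weight-$d$ class is exactly the polar weight enumerator $A_N^{(i)}(d)$. Since every codeword in that class contributes the same value $P_N^{(i)}(d)$ to the bound of Theorem \ref{theorem2}, summing over the classes yields $\sum_{\mathbf{c}^{(1)}}P_N^{(i)}(d_H(\mathbf{c}^{(0)},\mathbf{c}^{(1)}))=\sum_{d=1}^{N}A_N^{(i)}(d)\,P_N^{(i)}(d)$, which is the claimed inequality.

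There is essentially no hard step here: the whole argument is a bookkeeping regrouping of the union bound already furnished by Theorem \ref{theorem2}. The only points that require a moment's care are (i) checking that $P_N^{(i)}$ is genuinely a function of the Hamming weight alone — which is exactly the content of Eq. (\ref{equation18}) together with the choice $\mathbf{c}^{(0)}=0_1^N$ — and (ii) confirming that the weight runs from $1$ to $N$, i.e. that the all-zero word is absent from $\mathbb{D}_N^{(i)}$, which follows from the invertibility of $\mathbf{F}_N$. With those two observations in place the proposition is immediate, and I expect the proof to occupy only a couple of lines.
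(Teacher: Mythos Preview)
Your proposal is correct and follows essentially the same approach as the paper: the paper's proof consists of the single sentence ``Collecting the codewords with the same weight of the polar subcode $\mathbb{D}_N^{(i)}$ and using Definition \ref{definition3}, we complete the proof,'' which is exactly the weight-class regrouping of the Theorem \ref{theorem2} bound that you describe. Your write-up is simply more explicit about the two bookkeeping points (dependence on weight via Eq.~(\ref{equation18}) and the absence of the zero codeword), but the argument is identical.
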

\begin{proof}
Collecting the codewords with the same weight of the polar subcode $\mathbb{D}_N^{(i)}$ and using Definition \ref{definition3}, we complete the proof.
\end{proof}

Essentially, this bound is the union bound over the polar subcode. So we further establish the 1-1 mapping among the single-bit error event, the polarized channel, and the polar subcode, that is, $\left\{\mathcal{E}_i \leftrightarrow W_N^{(i)} \leftrightarrow {\mathbb{D}}_N^{(i)}\right\}$. Let $d_{min}^{(i)}$ denote the minimum Hamming distance of polar subcode ${\mathbb{D}}_N^{(i)}$. Thus, by using the PEP formula (\ref{equation18}), we obtain a union bound of BLER under SC decoding.
\begin{theorem}\label{theorem3}
Given the fixed configuration $(N,K,\mathcal{A})$, the block error probability of polar code is upper bounded by
\begin{equation}\label{union_bound}
P_e(N,K,\mathcal{A})\leq \sum\limits_{i\in \mathcal{A}} \sum \limits_{d=1}^N A_N^{(i)}(d) P_N^{(i)}(d).
\end{equation}
\end{theorem}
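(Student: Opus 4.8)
The plan is to chain the classical single-bit error decomposition with Proposition \ref{proposition3}. First I would invoke the inclusion $\mathcal{E}\subset\bigcup_{i\in\mathcal{A}}\mathcal{E}_i$ recalled in Section \ref{section_II} and apply the union bound once, obtaining $P_e(N,K,\mathcal{A})=P(\mathcal{E})\leq\sum_{i\in\mathcal{A}}P(\mathcal{E}_i)$. By the symmetry argument already used in the proof of Theorem \ref{theorem1}, the probability of the single-bit error event does not depend on the transmitted prefix, so $P(\mathcal{E}_i)=P(W_N^{(i)})$, the error probability of the $i$-th polarized channel.

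Second, I would bound each term $P(W_N^{(i)})$ using Proposition \ref{proposition3}, namely $P(W_N^{(i)})\leq\sum_{d=1}^N A_N^{(i)}(d)\,P_N^{(i)}(d)$. This sub-bound is itself assembled from Theorem \ref{theorem2} (the union bound over the polar subcode $\mathbb{D}_N^{(i)}$), the weight-grouping of Definition \ref{definition3}, and the reduction in Eq. (\ref{equation18}) showing that the PEP depends on a codeword pair only through $d_H(\mathbf{c}^{(0)},\mathbf{c}^{(1)})$ --- equivalently, after fixing $\mathbf{c}^{(0)}=0_1^N$, only through the Hamming weight of the corresponding codeword of $\mathbb{D}_N^{(i)}$. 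Summing over $i\in\mathcal{A}$ then yields exactly the bound in (\ref{union_bound}).

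There is no deep obstacle here: the statement is an assembly of results already proved above. The only thing that needs care is the bookkeeping for the two nested union bounds --- one over the information indices $i$ (enlarging $\mathcal{E}$ to $\bigcup_{i\in\mathcal{A}}\mathcal{E}_i$) and one over the codewords $\mathbf{c}^{(1)}\in\mathbb{D}_N^{(i)}$ (enlarging $\mathcal{E}_i$ to $\bigcup_{\mathbf{c}^{(1)}}\mathcal{D}_i(\mathbf{c}^{(0)}\to\mathbf{c}^{(1)})$) --- together with verifying that $A_N^{(i)}(d)$ counts \emph{all} nonzero codewords of the polar subcode with their multiplicities at each weight, so that regrouping the pairwise error events by weight drops nothing. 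It is also worth remarking, though immaterial to the validity of the bound, that it is generally loose, since the $\mathcal{E}_i$ (and the events $\mathcal{D}_i$ for a fixed $i$) overlap substantially.
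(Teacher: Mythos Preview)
Your proposal is correct and matches the paper's intended argument: the paper treats Theorem~\ref{theorem3} as immediate, simply stating that ``by using the PEP formula~(\ref{equation18}), we obtain a union bound of BLER under SC decoding,'' which is precisely the combination of the outer union bound~(\ref{equation8}) with Proposition~\ref{proposition3} that you spell out. There is nothing to add.
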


Compared with the upper bound (\ref{equation8}) proposed by Ar{\i}kan, this union bound has an analytical form, which is mainly determined by the polar spectrum of the selected polar subcodes, that is, the polar weight enumerators and the PEPs dominated by Hamming weights. So this upper bound can reveal more constructive features of polar codes than the traditional bounds. More tighter upper bounds, such as tangential bound or tangential-sphere bound (See \cite{Sason} and references therein), can also be used to evaluate the error performance of polar codes under SC decoding. Nevertheless, these improved upper bounds involve complex calculation and are not convenient for the practical application of polar coding. Therefore, in this paper, we focus on the simple upper bounds, such as union bound and union-Bhattacharyya bound. Next, we will further discuss the PEP under various B-DMC channels, such as BEC, BSC and BI-AWGN channel.

\subsubsection{PEP and BLER in the BEC}
Given the BEC $W:\mathcal{X}\to \mathcal{Y}$, $\mathcal{X}=\{0,1\}$ and $\mathcal{Y}=\{0,e,1\}$ (Here, $e$ denotes an erasure), with the transition probabilities $W(y|x)$ and the erasure probability $\epsilon$, we have $W(e|0)=W(e|1)=\epsilon$.

Let $\mathcal{F}=\{i:y_i=e\}$ denote the set of erasure-occurred indices and assume $|\mathcal{F}|=l$. Assuming ${\bf{c}}^{(0)}=0_1^N$ and $d=d_H({\bf{c}}^{(0)},{\bf{c}}^{(1)})$, we conclude that only the case of the set of erasure-occurred indices covering the support set of ${\bf{c}}^{(0)}+{\bf{c}}^{(1)}$ may result in an error, that is, $l\geq d$ and $\text{supp}\left\{{\bf{c}}^{(0)}+{\bf{c}}^{(1)}\right\}\subseteq\mathcal{F}$. Hence, the PEP in the BEC can be expressed as
\begin{equation}
\begin{aligned}
P_{BEC}\left({\bf{c}}^{(0)}\to{\bf{c}}^{(1)}\right)&=\sum \limits_{l=d}^{N}\left( {\begin{array}{c}  N-d \\ l-d \end{array}} \right)\epsilon^{(l-d)}(1-\epsilon)^{N-l}\epsilon^d\\
&=\epsilon^d.
\end{aligned}
\end{equation}
Furthermore, by Proposition \ref{proposition3}, the union bound of $W_N^{(i)}$ is derived as
\begin{equation}\label{equation22}
P_{BEC}\left(W_N^{(i)}\right)\leq \sum \limits_{d=d_{min}^{(i)}}^N A_N^{(i)}(d)\epsilon^d,
\end{equation}
where $d_{min}^{(i)}$ is the minimum Hamming distance of polar subcode $\mathbb{D}_N^{(i)}$.

Correspondingly, by Theorem \ref{theorem3}, the upper bound of BLER using SC decoding in the BEC can be written as
\begin{equation}\label{equation23}
P_{e,BEC}(N,K,\mathcal{A})\leq \sum\limits_{i\in \mathcal{A}} \sum \limits_{d=d_{min}^{(i)}}^N A_N^{(i)}(d) \epsilon^d.
\end{equation}
\subsubsection{PEP and BLER in the BSC}
Suppose the BSC $W:\mathcal{X}\to \mathcal{Y}$, $\mathcal{X}=\{0,1\}$ and $\mathcal{Y}=\{0,1\}$, with the transition probabilities $W(y|x)$ and the crossover error probability $\delta$, we have $W(1|0)=W(0|1)=\delta$.

Given the transmission codeword ${\bf{c}}^{(0)}=0_1^N$, the received vector can be written as $y_1^N={\bf{c}}^{(0)}+e_1^N$, where $e_1^N$ is the error vector. So the PEP is derived as below.
\begin{theorem}\label{theorem4}
Assuming the decision vector is ${\bf{c}}^{(1)}$ and $d=d_H({\bf{c}}^{(0)},{\bf{c}}^{(1)})$, the PEP in the BSC can be expressed as
\begin{equation}
P_{BSC}\left({\bf{c}}^{(0)}\to{\bf{c}}^{(1)}\right)
= \sum\limits_{m=\lceil d/2 \rceil}^{d}
\left( {\begin{array}{c} d \\ m\end{array}} \right) \delta^m(1-\delta)^{d-m}.
\end{equation}
\end{theorem}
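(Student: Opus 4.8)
The plan is to reduce the pairwise‑codeword error event to a purely combinatorial statement about how many of the channel errors fall inside the support of the difference pattern $\mathbf{z}\triangleq\mathbf{c}^{(0)}\oplus\mathbf{c}^{(1)}$, which by Proposition \ref{proposition2} has Hamming weight $d$. First I would use the fact that over a BSC with crossover probability $\delta<1/2$ the likelihood factorizes as $W^N\!\left(y_1^N\left|\mathbf{c}\right.\right)=\delta^{\,d_H(y_1^N,\mathbf{c})}(1-\delta)^{\,N-d_H(y_1^N,\mathbf{c})}$, which is a strictly decreasing function of $d_H(y_1^N,\mathbf{c})$ since $\delta/(1-\delta)<1$. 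Hence the defining inequality $W^N(y_1^N|\mathbf{c}^{(0)})\le W^N(y_1^N|\mathbf{c}^{(1)})$ of the event $\mathcal{D}_i(\mathbf{c}^{(0)}\to\mathbf{c}^{(1)})$ is equivalent to $d_H(y_1^N,\mathbf{c}^{(0)})\ge d_H(y_1^N,\mathbf{c}^{(1)})$.

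Next, writing $y_1^N=\mathbf{c}^{(0)}+e_1^N=e_1^N$ (recall $\mathbf{c}^{(0)}=0_1^N$), I would split the $N$ coordinates into $S\triangleq\text{supp}(\mathbf{z})$, with $|S|=d$, and its complement $S^c$. On $S^c$ the two codewords agree, so those coordinates contribute equally to $d_H(y_1^N,\mathbf{c}^{(0)})$ and to $d_H(y_1^N,\mathbf{c}^{(1)})$ and cancel in the comparison; on $S$, whether coordinate $j$ contributes to the first distance or the second is toggled by the value of $e_j$. Letting $m$ be the number of coordinates of $S$ on which $e_1^N$ equals $1$, one obtains $d_H(y_1^N,\mathbf{c}^{(0)})-d_H(y_1^N,\mathbf{c}^{(1)})=m-(d-m)=2m-d$, so the error event is exactly $\{2m\ge d\}=\{m\ge\lceil d/2\rceil\}$. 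Note that the ``$\le$'' in Definition \ref{definition2} makes the tie case $2m=d$ (possible only for even $d$) count as an error, which is why the summation starts at $\lceil d/2\rceil$.

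Finally, since the BSC acts independently on the $d$ coordinates of $S$, the number $m$ of flips inside $S$ is binomially distributed with mass $\binom{d}{m}\delta^m(1-\delta)^{d-m}$, while the coordinates outside $S$ are irrelevant to the event; summing over the admissible values $m=\lceil d/2\rceil,\dots,d$ and invoking the PEP definition (\ref{equation14}) gives the claimed formula. The only delicate points are the monotonicity step, which quietly relies on the standard assumption $\delta<1/2$ (for $\delta=1/2$ the channel is useless and the statement fails), and the bookkeeping that isolates the cancellation of the $S^c$‑coordinates; once that cancellation is made explicit, the remainder is a one‑line binomial sum, so I do not anticipate any genuine obstacle.
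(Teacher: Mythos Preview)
Your proposal is correct and follows essentially the same route as the paper: convert the likelihood comparison over the BSC into a Hamming-distance comparison, observe that only the coordinates in the support of $\mathbf{c}^{(0)}\oplus\mathbf{c}^{(1)}$ matter, and then count via the binomial the number $m$ of channel flips landing in that support. Your exposition is in fact more careful than the paper's---you make the cancellation on $S^c$ explicit, you handle the tie case consistently with the ``$\le$'' in Definition~\ref{definition2} (the paper writes a strict ``$>$'' in its proof yet states the formula with $\lceil d/2\rceil$), and you flag the implicit hypothesis $\delta<1/2$---so there is nothing to fix.
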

\begin{proof}
When the pairwise error occurs, we have $d_H\left(y_1^N,{\bf{c}}^{(0)}\right)>d_H\left(y_1^N,{\bf{c}}^{(1)}\right)$. This condition is equal to $w_H\left(y_1^N+{\bf{c}}^{(0)}\right)>w_H\left(y_1^N+{\bf{c}}^{(1)}\right)$. Then we have $w_H\left(e_1^N\right)>w_H\left(e_1^N+{\bf{c}}^{(0)}+{\bf{c}}^{(1)}\right)=d_H\left(e_1^N,{\bf{c}}^{(0)}+{\bf{c}}^{(1)}\right)$. It follows that an error will occur if more than half of the elements in the support set of the error vector $e_1^N$ overlap the support set of the codeword ${\bf{c}}^{(0)}+{\bf{c}}^{(1)}$, that is, suppose $\text{supp}(e_1^N)=\mathcal{G}_1$ and $\left|\mathcal{G}_1\right|=m \geq \lceil d/2\rceil$, we have $\mathcal{G}_1\subseteq\text{supp}\left({\bf{c}}^{(0)}+{\bf{c}}^{(1)}\right)$.
Thus we can enumerate the number of set partition of $\mathcal{G}_1$ and complete the proof.
\end{proof}

Furthermore, by Theorem \ref{theorem4}, the union bound of $W_N^{(i)}$ is derived as
\begin{equation}
P_{BSC}\left(W_N^{(i)}\right)\leq \sum \limits_{d=d_{min}^{(i)}}^N \sum\limits_{m=\lceil d/2 \rceil}^{d}\left( {\begin{array}{c} d \\ m\end{array}} \right) \delta^m(1-\delta)^{d-m}.
\end{equation}
Similarly, by Theorem \ref{theorem3}, the upper bound of BLER using SC decoding in the BSC can be written as
\begin{equation}\label{equation26}
P_{e,BSC}(N,K,\mathcal{A})\leq \sum\limits_{i\in \mathcal{A}} \sum \limits_{d=d_{min}^{(i)}}^N \sum\limits_{m=\lceil d/2 \rceil}^{d}\left( {\begin{array}{c} d \\ m\end{array}} \right) \delta^m(1-\delta)^{d-m}.
\end{equation}
\subsubsection{PEP and BLER in the AWGN channel}
For a binary-input AWGN channel, the received signal is expressed as
\begin{equation}
y_j=s_j+n_j,
\end{equation}
where $s_j\in \{\pm \sqrt{E_s}\}$ is the BPSK signal, $E_s$ is the signal energy, and $n_j\sim\mathcal{N}(0,\frac{N_0}{2})$ is a Gaussian noise sample with the zero mean and the variance $\frac{N_0}{2}$.
Since the BPSK modulation is used, the codeword ${\bf{c}}^{(0)}$ is transformed into the transmitted signal vector ${\bf{s}}^{(0)}=\sqrt{E_s}\left({\bf{1}}-2{\bf{c}}^{(0)}\right)$. Thus, the received signal vector can be addressed as
\begin{equation}\label{equation28}
{\bf{y}}=\sqrt{E_s}\left({\bf{1}}-2{\bf{c}}^{(0)}\right)+{\bf{n}},
\end{equation}
where ${\bf{1}}$ is an all-one vector and ${\bf{n}}$ is the AWGN noise vector. Then the transition probability of the product-form channel can be written as
\begin{equation}
W^N\left({\bf{y}\left|{\bf{c}}^{(0)}\right.}\right)=\frac{1}{\left(\pi N_0\right)^{N/2}}\exp\left\{ { - \frac{{{{\left\| {{\mathbf{y}} - {{\mathbf{s}}^{\left( 0 \right)}}} \right\|}^2}}}{{{N_0}}}} \right\}.
\end{equation}
For the PEP of SC decoding in the AWGN channel, we have the following theorem.
\begin{theorem}\label{theorem5}
Assuming the decision vector is ${\bf{c}}^{(1)}$ and $d=d_H\left({\bf{c}}^{(0)},{\bf{c}}^{(1)}\right)$, the PEP between ${\bf{c}}^{(0)}$ and ${\bf{c}}^{(1)}$ can be expressed as
\begin{equation}
P_{AWGN}\left({\bf{c}}^{(0)}\to{\bf{c}}^{(1)}\right)=Q\left[\sqrt{\frac{2E_s}{N_0} d_H\left({\bf{c}}^{(0)},{\bf{c}}^{(1)}\right)}\right],
\end{equation}
where $\frac{E_s}{N_0}$ is the symbol signal-to-noise ratio (SNR) and $Q(x)=\frac{1}{\sqrt{2\pi}}\int_x^{\infty} e^{-t^2/2}dt$ is the tail distribution function of the standard normal distribution.
\end{theorem}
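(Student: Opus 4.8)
The plan is to reduce the likelihood comparison that defines the pairwise-codeword error event $\mathcal{D}_i(\mathbf{c}^{(0)}\to\mathbf{c}^{(1)})$ to a simple half-space event for the Gaussian noise vector, and then evaluate the resulting Gaussian tail probability. First I would exploit the Gaussian form of $W^N(\mathbf{y}|\cdot)$: since $W^N(\mathbf{y}|\mathbf{c}^{(k)})=(\pi N_0)^{-N/2}\exp(-\|\mathbf{y}-\mathbf{s}^{(k)}\|^2/N_0)$ is a strictly decreasing function of the Euclidean distance $\|\mathbf{y}-\mathbf{s}^{(k)}\|$, the defining inequality $W^N(\mathbf{y}|\mathbf{c}^{(0)})\le W^N(\mathbf{y}|\mathbf{c}^{(1)})$ of Definition \ref{definition2} is equivalent to $\|\mathbf{y}-\mathbf{s}^{(0)}\|^2\ge\|\mathbf{y}-\mathbf{s}^{(1)}\|^2$, i.e.\ $\mathbf{y}$ lies at least as close to $\mathbf{s}^{(1)}$ as to $\mathbf{s}^{(0)}$.

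Next, under the transmission hypothesis $\mathbf{c}^{(0)}$ (equivalently $\mathbf{s}^{(0)}$), I would substitute $\mathbf{y}=\mathbf{s}^{(0)}+\mathbf{n}$ from \eqref{equation28}. Expanding both squared norms and cancelling the common $\|\mathbf{n}\|^2$ reduces the error event to the linear condition $\bigl(\mathbf{n}\cdot(\mathbf{s}^{(1)}-\mathbf{s}^{(0)})\bigr)\ge\tfrac12\|\mathbf{s}^{(1)}-\mathbf{s}^{(0)}\|^2$. The left-hand side is a fixed linear functional of the i.i.d.\ $\mathcal{N}(0,N_0/2)$ noise components, hence a zero-mean Gaussian random variable of variance $\tfrac{N_0}{2}\|\mathbf{s}^{(1)}-\mathbf{s}^{(0)}\|^2$. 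Standardizing then yields $P_{AWGN}(\mathbf{c}^{(0)}\to\mathbf{c}^{(1)})=Q\!\left(\dfrac{\|\mathbf{s}^{(1)}-\mathbf{s}^{(0)}\|}{\sqrt{2N_0}}\right)$.

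Finally I would convert the signal distance into the Hamming distance. From $\mathbf{s}^{(k)}=\sqrt{E_s}(\mathbf{1}-2\mathbf{c}^{(k)})$ one gets $\mathbf{s}^{(1)}-\mathbf{s}^{(0)}=2\sqrt{E_s}\,(\mathbf{c}^{(0)}-\mathbf{c}^{(1)})$, and since $\mathbf{c}^{(0)},\mathbf{c}^{(1)}$ are binary, $\|\mathbf{c}^{(0)}-\mathbf{c}^{(1)}\|^2=d_H(\mathbf{c}^{(0)},\mathbf{c}^{(1)})=d$. Hence $\|\mathbf{s}^{(1)}-\mathbf{s}^{(0)}\|=2\sqrt{E_s d}$ and $P_{AWGN}(\mathbf{c}^{(0)}\to\mathbf{c}^{(1)})=Q\!\left(\sqrt{2E_s d/N_0}\right)$, which is the asserted formula.

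The computation is essentially routine; the only points needing care are the direction of the inequality when passing from the likelihood comparison to the distance comparison (monotonicity of $t\mapsto e^{-t/N_0}$), and correctly identifying the variance $\tfrac{N_0}{2}\|\mathbf{s}^{(1)}-\mathbf{s}^{(0)}\|^2$ of the projected noise — dropping the factor $\|\mathbf{s}^{(1)}-\mathbf{s}^{(0)}\|^2$ or the per-component variance $N_0/2$ would scale the argument of $Q$ by a spurious constant. I would also remark that the probability-zero boundary case of exact equality of the two likelihoods does not affect the continuous Gaussian probability, so the distinction between $\le$ and $<$ in the event definition is immaterial.
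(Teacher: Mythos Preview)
Your proposal is correct and follows essentially the same route as the paper: both convert the likelihood inequality to a Euclidean-distance inequality, substitute $\mathbf{y}=\mathbf{s}^{(0)}+\mathbf{n}$ to obtain a linear half-space condition on the noise, evaluate the resulting Gaussian tail as $Q\!\left(\|\mathbf{s}^{(0)}-\mathbf{s}^{(1)}\|/\sqrt{2N_0}\right)$, and then use $\|\mathbf{s}^{(0)}-\mathbf{s}^{(1)}\|=2\sqrt{E_s d}$. The only cosmetic difference is that the paper writes the decision region $\mathcal{H}$ and integrates $W^N(\mathbf{y}\mid\mathbf{s}^{(0)})$ over it, whereas you compute the variance of the projected noise directly and standardize --- these are the same computation.
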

\begin{proof}
First, we assume the codeword ${\bf{c}}^{(0)}$ and ${\bf{c}}^{(1)}$ are mapped to the transmission vector ${\bf{s}}^{(0)}$ and the signal vector ${\bf{s}}^{(1)}$ respectively. If a pairwise error occurs, the Euclidian distances among the received vector and the transmission/signal vectors satisfy the inequality  ${\left\| {\mathbf{y}}- {\mathbf{s}}^{(0)} \right\|^2} > \left\| {\mathbf{y}}-{\mathbf{s}}^{(1)} \right\|^2$. Substituting (\ref{equation28}), we have ${\left\| \mathbf{n}\right\|^2} > \left\|\left({\mathbf{s}}^{(0)}-{\mathbf{s}}^{(1)}\right)+{\mathbf{n}} \right\|^2$. Extending the inequality, we obtain the decision region $\mathcal{H}=\left\{\mathbf{n}:{\mathbf{n}}\left({\mathbf{s}}^{(0)}-{\mathbf{s}}^{(1)}\right)^T<- \frac{1}{2}\left\|{\mathbf{s}}^{(0)}-{\mathbf{s}}^{(1)}\right\|^2\right\}$. So the PEP can be written as
\begin{equation}
\begin{aligned}
  P\left({\mathbf{c}}^{(0)}\to{\mathbf{c}}^{(1)}\right)&= {\int { \cdots \int  } }_{\mathcal{H}} {W_N}\left( {{\mathbf{y}}\left| {{{\mathbf{s}}^{\left( 0 \right)}}} \right.} \right)d{\mathbf{y}} \\
   &= Q\left[ {\sqrt {\frac{{{1}}}{{2{N_0}}}} \left\| {{{\mathbf{s}}^{\left( 0 \right)}} - {{\mathbf{s}}^{\left( 1 \right)}}} \right\|} \right]. \\
\end{aligned}
\end{equation}
Furthermore, due to
\begin{equation}\nonumber
\begin{aligned}
\left\|{\mathbf{s}}^{(0)}-{\mathbf{s}}^{(1)}\right\|&=\left\|\sqrt{E_s}\left[{\mathbf{1}}-2{\mathbf{c}}^{(0)}\right]-\sqrt{E_s}\left[{\mathbf{1}}-2{\mathbf{c}}^{(1)}\right]\right\|\\
&=2\sqrt{E_s}\left\|{\mathbf{c}}^{(1)}-{\mathbf{c}}^{(0)}\right\|,
\end{aligned}
\end{equation}
then $\left\|{\mathbf{s}}^{(0)}-{\mathbf{s}}^{(1)}\right\|=2\sqrt{E_s d_H\left({\mathbf{c}}^{(0)},{\mathbf{c}}^{(1)}\right)}$. So we complete the proof.
\end{proof}

Furthermore, by Theorem \ref{theorem5}, the union bound of $W_N^{(i)}$ is derived as
\begin{equation}
P_{AWGN}\left(W_N^{(i)}\right)\leq \sum \limits_{d=d_{min}^{(i)}}^N A_N^{(i)}(d) Q\left(\sqrt{\frac{2dE_s}{N_0}}\right).
\end{equation}
Similarly, by Theorem \ref{theorem3}, the upper bound of BLER using SC decoding in the BI-AWGN channel can be written as
\begin{equation}\label{equation33}
P_{e,AWGN}(N,K,\mathcal{A})\leq \sum\limits_{i\in \mathcal{A}} \sum \limits_{d=d_{min}^{(i)}}^N A_N^{(i)}(d) Q\left(\sqrt{\frac{2dE_s}{N_0}}\right).
\end{equation}
\subsection{Union-Bhattacharyya Bound}
Union-Bhattacharyya (UB) bound provides a simple form to analyze the error performance bound of the B-DMC channel. Although the UB bound is slightly looser than the union bound, it is convenient to the theoretic analysis.
\begin{proposition}\label{proposition4}
Given the B-DMC $W$, the union-Bhattacharyya (UB) bound of the polarized channel $W_N^{(i)}$ is given by
\begin{equation}\label{equation34}
P\left(W_N^{(i)}\right)\leq \sum \limits_{d=1}^N A_N^{(i)}(d) (Z(W))^d,
\end{equation}
where $Z(W)$ is the Bhattacharyya parameter. Specifically, for the BEC, BSC and AWGN channel, the Bhattacharyya parameter $Z(W)$ is $\epsilon$, $2\sqrt{\delta(1-\delta)}$ and $\exp(-\frac{E_s}{N_0})$, respectively.
\end{proposition}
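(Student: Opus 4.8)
The plan is to derive the UB bound~(\ref{equation34}) from the union bound of Proposition~\ref{proposition3} by replacing each pairwise error probability $P_N^{(i)}(d)$ with the classical Bhattacharyya estimate $(Z(W))^d$. Since Proposition~\ref{proposition3} already gives $P(W_N^{(i)})\le\sum_{d=1}^N A_N^{(i)}(d)\,P_N^{(i)}(d)$, everything reduces to the single per-distance inequality $P_N^{(i)}(d)\le (Z(W))^d$.

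To prove this, I would start from the simplified PEP expression~(\ref{equation18}), where the convention $\mathbf{c}^{(0)}=0_1^N$ is in force and $\mathbf{c}^{(1)}$ is a codeword of $\mathbb{D}_N^{(i)}$ with $w_H(\mathbf{c}^{(1)})=d$, so that
\[
P_N^{(i)}(d)=\sum_{y_1^N} W^N(y_1^N|0_1^N)\,1\!\left\{W^N(y_1^N|0_1^N)\le W^N(y_1^N|\mathbf{c}^{(1)})\right\}.
\]
On every output $y_1^N$ appearing in this sum one has $W^N(y_1^N|0_1^N)>0$, and the indicator forces $W^N(y_1^N|0_1^N)\le\sqrt{W^N(y_1^N|0_1^N)\,W^N(y_1^N|\mathbf{c}^{(1)})}$; summing term-by-term gives $P_N^{(i)}(d)\le\sum_{y_1^N}\sqrt{W^N(y_1^N|0_1^N)\,W^N(y_1^N|\mathbf{c}^{(1)})}$. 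Then I would invoke the memorylessness of the product-form channel, $W^N(y_1^N|x_1^N)=\prod_{j=1}^N W(y_j|x_j)$, to factor the output sum coordinate-wise:
\[
\sum_{y_1^N}\sqrt{W^N(y_1^N|0_1^N)\,W^N(y_1^N|\mathbf{c}^{(1)})}=\prod_{j=1}^N\left(\sum_{y_j\in\mathcal{Y}}\sqrt{W(y_j|0)\,W(y_j|c^{(1)}_j)}\right).
\]
Each coordinate with $c^{(1)}_j=0$ contributes $\sum_{y_j}W(y_j|0)=1$, and each coordinate with $c^{(1)}_j=1$ contributes exactly $Z(W)$ by the definition~(\ref{equation2}); since $\mathbf{c}^{(1)}$ has weight $d$ this product equals $(Z(W))^d$, which together with Proposition~\ref{proposition3} establishes~(\ref{equation34}).

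It then remains to specialize $Z(W)$ for the three channels, which is a direct evaluation of~(\ref{equation2}). For the BEC only the erasure symbol has support under both inputs, so $Z=\sqrt{\epsilon\cdot\epsilon}=\epsilon$; for the BSC, $Z=\sqrt{(1-\delta)\delta}+\sqrt{\delta(1-\delta)}=2\sqrt{\delta(1-\delta)}$; and for the BI-AWGN channel $Z=\int_{-\infty}^{\infty}\sqrt{W(y|0)W(y|1)}\,dy$ with the two Gaussian densities of variance $N_0/2$ centred at $\pm\sqrt{E_s}$, and completing the square in the exponent leaves a unit-mass Gaussian integral times $\exp(-E_s/N_0)$, hence $Z=\exp(-E_s/N_0)$. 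As a sanity check these are consistent with the exact/closed-form PEPs found earlier, e.g.\ $P_{BEC}=\epsilon^d$ exactly and $Q(\sqrt{2dE_s/N_0})\le e^{-dE_s/N_0}$ via $Q(x)\le e^{-x^2/2}$.

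Essentially nothing here is a genuine obstacle --- it is the textbook Bhattacharyya union bound adapted to the polar subcode. The only points needing a little care are (i) justifying the indicator-to-square-root bound term-by-term when $W^N(y_1^N|0_1^N)$ can vanish (as in the BEC), which is fine because such outputs do not occur in the left-hand sum, and (ii) making explicit that the coordinate-wise factorization relies on $W^N$ being memoryless, after which the weight $d=w_H(\mathbf{c}^{(1)})$ is exactly the number of factors equal to $Z(W)$.
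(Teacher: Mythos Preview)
Your argument is correct and is precisely the standard derivation: bound each PEP by the Bhattacharyya estimate $(Z(W))^d$ via the inequality $1\{a\le b\}\le\sqrt{b/a}$, factor over the memoryless coordinates, and plug into the union bound of Proposition~\ref{proposition3}. The paper actually states Proposition~\ref{proposition4} without proof, so there is nothing to compare against; your write-up fills in exactly the details one would expect, and the channel-specific evaluations of $Z(W)$ are correct.
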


Furthermore, we can derive the UB bound of BLER as follows.
\begin{theorem}\label{theorem6}
Given the fixed configuration $(N,K,\mathcal{A})$, the UB bound of the block error probability is written as
\begin{equation}\label{UB_bound}
P_e(N,K,\mathcal{A})\leq \sum\limits_{i\in \mathcal{A}} \sum \limits_{d=1}^N A_N^{(i)}(d) (Z(W))^d.
\end{equation}
\end{theorem}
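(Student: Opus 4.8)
The plan is to combine Theorem \ref{theorem3} (the union bound on $P_e$ in terms of the polar spectrum) with Proposition \ref{proposition4} (the Bhattacharyya relaxation of the per-channel bound), so the proof is essentially a substitution argument. First I would recall that by Theorem \ref{theorem3} we already have
\[
P_e(N,K,\mathcal{A})\leq \sum_{i\in\mathcal{A}} P\left(W_N^{(i)}\right)\leq \sum_{i\in\mathcal{A}}\sum_{d=1}^N A_N^{(i)}(d)\,P_N^{(i)}(d),
\]
where the first inequality is the union bound over single-bit error events $\mathcal{E}\subset\bigcup_{i\in\mathcal{A}}\mathcal{E}_i$ established in \cite{Polarcode_Arikan} together with the identification $P(W_N^{(i)})=P(\mathcal{E}_i)$, and the second is Proposition \ref{proposition3}.

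Next I would invoke Proposition \ref{proposition4}, which bounds each term $P\left(W_N^{(i)}\right)$ from above by $\sum_{d=1}^N A_N^{(i)}(d)\,(Z(W))^d$; equivalently, it asserts the pairwise-error-probability estimate $P_N^{(i)}(d)\leq (Z(W))^d$, the standard Bhattacharyya bound on the PEP for a pair of codewords at Hamming distance $d$ over the memoryless channel $W^N$ (this is where the product form $W^N(y_1^N|x_1^N)=\prod_j W(y_j|x_j)$ and the definition \eqref{equation2} of $Z(W)$ enter, and it specializes to $\epsilon$, $2\sqrt{\delta(1-\delta)}$, $\exp(-E_s/N_0)$ in the three channels of interest). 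Substituting this per-channel bound into the outer sum over $i\in\mathcal{A}$ and exchanging the (finite) order of summation yields
\[
P_e(N,K,\mathcal{A})\leq \sum_{i\in\mathcal{A}} P\left(W_N^{(i)}\right)\leq \sum_{i\in\mathcal{A}}\sum_{d=1}^N A_N^{(i)}(d)\,(Z(W))^d,
\]
which is exactly \eqref{UB_bound}.

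There is essentially no hard part here: the theorem is a corollary of two already-proven statements, and the only thing to be careful about is that every inequality points the same way and that the summations are finite so termwise bounding is legitimate (each $\mathbb{D}_N^{(i)}$ has $2^{N-i}$ codewords and $d$ ranges over $\llbracket 1,N\rrbracket$). If I wanted to be fully self-contained I would also note that terms with $A_N^{(i)}(d)=0$ for $d<d_{min}^{(i)}$ drop out, so the sum may be started at $d_{min}^{(i)}$ without changing its value, matching the specialized forms \eqref{equation23}, \eqref{equation26}, \eqref{equation33}; but for the general statement the range $d\in\llbracket 1,N\rrbracket$ is harmless. Thus the proof reduces to: apply Theorem \ref{theorem3}, replace $P_N^{(i)}(d)$ by $(Z(W))^d$ via Proposition \ref{proposition4}, and collect terms.
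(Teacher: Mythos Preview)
Your proposal is correct and matches the paper's approach: the paper states Theorem~\ref{theorem6} immediately after Proposition~\ref{proposition4} without writing out a separate proof, treating it as the direct combination of the per-channel UB bound (Proposition~\ref{proposition4}) with the union bound $P_e(N,K,\mathcal{A})\leq\sum_{i\in\mathcal{A}}P\left(W_N^{(i)}\right)$ from (\ref{equation8})/Theorem~\ref{theorem3}. Your write-up makes this explicit and is entirely in line with the intended argument.
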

Following we will present the detailed form of UB bound in the BEC, BSC and AWGN channel.

Considering the Bhattacharyya parameter for the three channels, by Proposition \ref{proposition4}, the UB bound of the polarized channel $W_N^{(i)}$ is written as
\begin{equation}
P\left( {W_N^{(i)}} \right) \le \left\{
{\begin{aligned}
&{\sum\limits_{d = d_{min}^{(i)}}^N {A_N^{(i)}} (d){\epsilon^d},{\rm{BEC}}}\\
&{\sum\limits_{d = d_{min}^{(i)}}^N {A_N^{(i)}} (d){{\left( {2\sqrt {\delta (1 - \delta )} } \right)}^d},{\rm{BSC}}}\\
&{\sum\limits_{d = d_{min}^{(i)}}^N {A_N^{(i)}} (d)\exp \left( { - \frac{{d{E_s}}}{{{N_0}}}} \right),{\rm{AWGN}}}
\end{aligned}} \right.
\end{equation}
Note the UB bound is equivalent to the union bound (\ref{equation22}) for the BEC. Similarly, according to Theorem \ref{theorem6}, given the fixed configuration $(N,K,\mathcal{A})$, the UB bound of the BLER using SC decoding in the BEC is equal to Eq. (\ref{equation23}).

Further, the UB bounds of the BLER in the BSC and AWGN channel are respectively given by
\begin{equation}\label{equation38}
P_{e,BSC}(N,K,\mathcal{A})\leq \sum\limits_{i\in \mathcal{A}} \sum \limits_{d=d_{min}^{(i)}}^N A_N^{(i)}(d) {\left(2\sqrt{\delta(1-\delta)}\right)}^d,
\end{equation}
and
\begin{equation}\label{equation40}
P_{e,AWGN}(N,K,\mathcal{A})\leq \sum\limits_{i\in \mathcal{A}} \sum \limits_{d=d_{min}^{(i)}}^N A_N^{(i)}(d) \exp\left(-\frac{dE_s}{N_0}\right).
\end{equation}

\begin{remark}
Observing the union bound and UB bound of the polarized channel, we find that the channel parameters (erasure probability, crossover error probability or symbol SNR) and the weight distribution of the polar subcode explicitly determine the corresponding reliability. On the contrary, the reliability evaluation based on the Bhattacharyya parameter or other metrics needs an iterative calculation and cannot give a simple and direct formula. Since the union/UB bounds reveal the analytical form, we can easily use them to analyze and design the polar codes.
\end{remark}

\subsection{Mutual Information Analysis of Polarized Channel}
Based on the polar spectrum and PEP, we further analyze the symmetric capacity of the polarized channel $W_N^{(i)}$, that is, $I\left(W_N^{(i)}\right)=I\left(U_i;Y_1^N\left|U_1^{i-1}\right.\right)$. At first, we can derive the upper and lower bounds of the Bhattacharyya parameter $Z\left(W_N^{(i)}\right)$ as below.

\begin{proposition}\label{proposition5}
Given the polarized channel $W_N^{(i)}$, the upper bound of the corresponding Bhattacharyya parameter satisfy the following inequality,
\begin{equation}\label{upper_bound_Z}
Z\left(W_N^{(i)}\right) \leq \sum \limits_{d=d_{min}^{(i)}}^N A_N^{(i)}(d) (Z(W))^d.
\end{equation}
\end{proposition}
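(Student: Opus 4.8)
The plan is to bound $Z(W_N^{(i)})$ directly from its definition in Eq.~(\ref{equation5}) by expanding each polarized transition probability via Eq.~(\ref{equation4}) and then decoupling the two inner summations that sit inside the square root. Concretely, writing $W_N^{(i)}(y_1^N,u_1^{i-1}|t) = \tfrac{1}{2^{N-1}}\sum_{u_{i+1}^N} W_N(y_1^N|u_1^{i-1},t,u_{i+1}^N)$ for $t\in\{0,1\}$, I would invoke the elementary inequality $\sqrt{(\sum_j a_j)(\sum_k b_k)} = \sqrt{\sum_{j,k} a_j b_k} \le \sum_{j,k}\sqrt{a_j b_k}$ for nonnegative reals to obtain
\[
\sqrt{W_N^{(i)}(\cdot|0)\,W_N^{(i)}(\cdot|1)} \le \frac{1}{2^{N-1}}\sum_{u_{i+1}^N,\,v_{i+1}^N}\sqrt{W_N(y_1^N|u_1^{i-1},0,u_{i+1}^N)\,W_N(y_1^N|u_1^{i-1},1,v_{i+1}^N)}.
\]

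Next I would sum over $y_1^N$ and $u_1^{i-1}$ and swap the order of summation. Because $W_N(y_1^N|a_1^N) = \prod_{j=1}^N W(y_j|(a_1^N\mathbf{F}_N)_j)$ is a product-form channel, the inner sum over $y_1^N$ factorizes coordinatewise: a coordinate contributes $1$ where the two codewords agree and $Z(W)$ where they differ, so $\sum_{y_1^N}\sqrt{W_N(y_1^N|a_1^N)W_N(y_1^N|b_1^N)} = Z(W)^{d_H(a_1^N\mathbf{F}_N,\,b_1^N\mathbf{F}_N)}$. Applying this with $a_1^N=(u_1^{i-1},0,u_{i+1}^N)$ and $b_1^N=(u_1^{i-1},1,v_{i+1}^N)$ and noting $a_1^N\oplus b_1^N=(0_1^{i-1},1,u_{i+1}^N\oplus v_{i+1}^N)$ turns the exponent into the Hamming weight of a codeword of the polar subcode $\mathbb{D}_N^{(i)}$ (Definition~\ref{definition1}), exactly the relation used in Proposition~\ref{proposition2}.

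Finally, I would carry out the counting. The sum over $u_1^{i-1}$ is free and yields a factor $2^{i-1}$; for each fixed difference vector $w_{i+1}^N=u_{i+1}^N\oplus v_{i+1}^N$ there are $2^{N-i}$ ordered pairs $(u_{i+1}^N,v_{i+1}^N)$ producing it, so the prefactor collapses to $2^{i-1}\cdot 2^{N-i}/2^{N-1}=1$ and we are left with $Z(W_N^{(i)}) \le \sum_{w_{i+1}^N\in\mathcal{X}^{N-i}} Z(W)^{w_H((0_1^{i-1},1,w_{i+1}^N)\mathbf{F}_N)}$. Since $\{(0_1^{i-1},1,w_{i+1}^N)\mathbf{F}_N\}$ runs exactly over $\mathbb{D}_N^{(i)}$, grouping codewords by weight and invoking Definition~\ref{definition3} gives $\sum_{d=d_{min}^{(i)}}^N A_N^{(i)}(d)(Z(W))^d$, which is the claim; the lower limit is $d_{min}^{(i)}\ge 1$ because the $i$-th input coordinate is pinned to $1$ and $\mathbf{F}_N$ is invertible, so $\mathbb{D}_N^{(i)}$ contains no zero-weight codeword.

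The main obstacle is orienting the decoupling inequality correctly and then tracking the normalization constants: the $\tfrac{1}{2^{N-1}}$ from Eq.~(\ref{equation4}), the $2^{i-1}$ from the unconstrained $u_1^{i-1}$, and the $2^{N-i}$ multiplicity of each difference pattern must cancel exactly. Everything else — the product-form factorization of the Bhattacharyya sum and the identification of the resulting exponent with a polar-subcode weight — is a direct reuse of the structure already established in Theorem~\ref{theorem1} and Proposition~\ref{proposition2}.
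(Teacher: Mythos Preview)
Your proposal is correct and follows essentially the same route as the paper: both expand $W_N^{(i)}$ via Eq.~(\ref{equation4}), apply the decoupling inequality $\sqrt{(\sum_j a_j)(\sum_k b_k)}\le\sum_{j,k}\sqrt{a_jb_k}$, use the product-form factorization $\sum_{y_1^N}\sqrt{W^N(y_1^N|\mathbf{a})W^N(y_1^N|\mathbf{b})}=Z(W)^{d_H(\mathbf{a},\mathbf{b})}$, and then identify the exponent with a polar-subcode weight. The only cosmetic difference is that the paper first invokes channel symmetry to set $u_1^{i-1}=0_1^{i-1}$ (and then subcode symmetry to fix $u_{i+1}^N=0$), whereas you keep the full sums and cancel the resulting multiplicities $2^{i-1}\cdot2^{N-i}$ against the $2^{N-1}$ prefactor directly; either bookkeeping gives the same bound.
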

\begin{proof}
According to the definition in (\ref{equation5}), the Bhattacharyya parameter can be further enlarged as
\begin{equation}
\begin{aligned}
  &Z\left( {W_N^{\left( i \right)}} \right) \\
  &\begin{aligned}
        \leq\frac{1}{2^{N-i}} \sum\limits_{y_1^N }&\sqrt {\sum\limits_{u_{i+1}^N }W_N\left( {y_1^N\left| 0_1^{i-1},0,u_{i+1}^{N} \right.} \right)} \\
            &\cdot\sqrt{\sum\limits_{v_{i+1}^N }W_N\left( {y_1^N\left| 0_1^{i-1},1,v_{i+1}^N \right.} \right)}
      \end{aligned} \\
  &\begin{aligned}
     \leq\frac{1}{2^{N-i}} \sum\limits_{y_1^N }\sum\limits_{u_{i+1}^N }\sum\limits_{v_{i+1}^N }&\sqrt {W_N\left( {y_1^N\left| 0_1^{i-1},0,u_{i+1}^{N} \right.} \right)}\\
      &\cdot\sqrt{ W_N\left( {y_1^N\left| 0_1^{i-1},1,v_{i+1}^N \right.} \right)} .
      \end{aligned}\\
\end{aligned}
\end{equation}
Without loss of generality,  let ${\bf{c}}^{(0)}=\left(0_1^{(i-1)},0,u_{i+1}^N\right){\bf{F}}_N=0_1^N$ and ${\bf{c}}^{(1)}=\left(0_1^{(i-1)},1,v_{i+1}^N\right){\bf{F}}_N$, by using the symmetry of subcode $\mathbb{C}_N^{(i)}$, we have
\begin{equation}\label{equation44}
Z\left( {W_N^{\left( i \right)}} \right)  \leq \sum\limits_{y_1^N \in {\mathcal{Y}^N}}\sum\limits_{{\bf{c}}^{(1)}} {\sqrt {W^N\left( {y_1^N\left| {\bf{c}}^{(0)} \right.} \right)  W^N\left( {y_1^N\left| {\bf{c}}^{(1)} \right.} \right)} }.
\end{equation}
Enumerating all the codewords ${\bf{c}}^{(1)}$ of polar subcode $\mathbb{D}_N^{(i)}$, we obtain the upper bound of Bhattacharyya parameter.
\end{proof}

If we only consider one codeword with the minimum Hamming weight $d_{min}^{(i)}$ in (\ref{upper_bound_Z}), this term can be intended to serve as a lower bound to the Bhattacharyya parameter, that is,
\begin{equation}\label{Z_lower}
\left(Z(W)\right)^{d_{min}^{(i)}}  \lesssim Z\left(W_N^{(i)}\right).
\end{equation}

By using Proposition \ref{proposition5}, we can bound the symmetric capacity of any B-DMC $W$ as the following Theorem.
\begin{theorem}\label{theorem8}
Given the polarized channel $W_N^{(i)}$ and the polar spectrum $\left\{A_N^{(i)}(d)\right\}$ of polar subcode $\mathbb{D}_N^{(i)}$, the symmetric capacity $I\left(W_N^{(i)}\right)=I\left(U_i;Y_1^N\left|U_1^{i-1}\right.\right)$ can be lower bounded by
\begin{equation}\label{I_lower}
I\left(W_N^{(i)}\right)\geq \max\left\{1-\log \left[1+\sum \limits_{d=d_{min}^{(i)}}^N A_N^{(i)}(d) (Z(W))^d\right],0\right\}.
\end{equation}
\end{theorem}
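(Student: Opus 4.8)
The plan is to derive the bound by combining two facts: the classical relationship between the symmetric capacity and the Bhattacharyya parameter of a binary-input channel, and the upper bound on $Z(W_N^{(i)})$ already furnished by Proposition \ref{proposition5}. First I would recall Ar{\i}kan's inequality (Proposition 1 in \cite{Polarcode_Arikan}): for any B-DMC $W$,
\[
I(W)\ \geq\ \log\frac{2}{1+Z(W)}\ =\ 1-\log\bigl(1+Z(W)\bigr),
\]
the last equality holding because $\log$ is taken to base $2$. Since each polarized channel $W_N^{(i)}:\mathcal{X}\to\mathcal{Y}\times\mathcal{X}^{i-1}$ is itself a binary-input channel (its input alphabet is $\mathcal{X}=\{0,1\}$), this inequality applies verbatim with $W$ replaced by $W_N^{(i)}$, yielding
\[
I\bigl(W_N^{(i)}\bigr)\ \geq\ 1-\log\bigl(1+Z(W_N^{(i)})\bigr).
\]

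Next I would use monotonicity: the map $t\mapsto 1-\log(1+t)$ is strictly decreasing on $[0,\infty)$, so any upper bound on $Z(W_N^{(i)})$ becomes a lower bound on the right-hand side above. Substituting the bound of Proposition \ref{proposition5}, namely $Z(W_N^{(i)})\leq\sum_{d=d_{min}^{(i)}}^{N}A_N^{(i)}(d)(Z(W))^d$, gives
\[
I\bigl(W_N^{(i)}\bigr)\ \geq\ 1-\log\!\left[1+\sum_{d=d_{min}^{(i)}}^{N}A_N^{(i)}(d)(Z(W))^d\right].
\]
Finally, mutual information is nonnegative, so $I(W_N^{(i)})\geq 0$ holds trivially; taking the maximum of the two lower bounds then produces exactly the claimed inequality (\ref{I_lower}).

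Since the argument is a short chain of known inequalities, there is no serious technical hurdle. The one point that needs to be checked is that Ar{\i}kan's $I$--$Z$ inequality is stated for an arbitrary B-DMC rather than only for the physical channel $W$, so that it may legitimately be applied to the synthetic channel $W_N^{(i)}$; this is indeed the case, since the proof of that inequality uses only that the channel has a binary input alphabet together with the definitions (\ref{equation1}) and (\ref{equation2}). If one prefers a self-contained treatment, the same $I$--$Z$ bound can be re-derived directly for $W_N^{(i)}$ from the definition of $Z(W_N^{(i)})$ in (\ref{equation5}) and the corresponding expression for $I(W_N^{(i)})$, but invoking the existing result keeps the proof short.
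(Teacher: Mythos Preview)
Your proposal is correct and matches the paper's own proof essentially step for step: the paper invokes Proposition~1 of \cite{Polarcode_Arikan} to get $I(W_N^{(i)})\ge \log\frac{2}{1+Z(W_N^{(i)})}$ and then substitutes the upper bound on $Z(W_N^{(i)})$ from Proposition~\ref{proposition5}. Your write-up is in fact slightly more explicit than the paper's, since you spell out the monotonicity step and the trivial $I\ge 0$ branch that justifies the outer $\max\{\cdot,0\}$.
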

\begin{proof}
According to Proposition 1 in \cite{Polarcode_Arikan}, the symmetric capacity $I\left(W_N^{(i)}\right)$ can be bounded by
\begin{equation}\label{I_upper_lower}
\begin{aligned}
\log \frac{2}{1+Z\left(W_N^{(i)}\right)}\leq I\left(W_N^{(i)}\right)\leq \sqrt{1-Z^2\left(W_N^{(i)}\right)}.
\end{aligned}
\end{equation}
From Proposition \ref{proposition5}, we complete the proof.
\end{proof}

Substituting (\ref{Z_lower}) into (\ref{I_upper_lower}), the symmetric capacity $I\left(W_N^{(i)}\right)$ can be upper bounded by
\begin{equation}\label{I_upper}
I\left(W_N^{(i)}\right)\lesssim \sqrt{1-(Z(W))^{2d_{min}^{(i)}}}.
\end{equation}

Especially, for the BEC, we can further derive the upper/lower bounds of the symmetric capacity of $W_N^{(i)}$ as below.
\begin{proposition}
Given the BEC $W$ with the erasure probability $\epsilon$, then the symmetric capacity of the polarized channel $W_N^{(i)}$ satisfies
\begin{equation}\label{equation46}
I\left(W_N^{(i)}\right)\geq \max\left\{1-\sum \limits_{d=d_{min}^{(i)}}^N A_N^{(i)}(d) \epsilon^d,0\right\}.
\end{equation}
\end{proposition}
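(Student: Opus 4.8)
The plan is to specialize the general lower bound of Theorem~\ref{theorem8} to the binary erasure channel and then sharpen it using the structural fact, proved in the derivation leading to~(\ref{equation22}), that the union bound over a polar subcode in the BEC is exact in the sense that each pairwise error event contributes exactly $\epsilon^d$ and, more importantly, that these events are \emph{nested}. Concretely, for the BEC we know from Proposition~\ref{proposition4} that $Z(W)=\epsilon$, so a direct substitution into~(\ref{I_lower}) already gives $I(W_N^{(i)})\geq \max\{1-\log[1+\sum_{d\geq d_{min}^{(i)}}A_N^{(i)}(d)\epsilon^d],0\}$. The claimed bound~(\ref{equation46}) is strictly stronger (since $\log(1+x)\leq x$ would go the wrong way here), so a plain substitution is not enough; I would instead go back to the identity $I(W)=1-Z(W)$ that holds \emph{with equality} for the BEC, rather than the generic two-sided bound~(\ref{I_upper_lower}).

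First I would recall that for the original BEC $W$ one has $I(W)=1-\epsilon=1-Z(W)$, and that channel polarization preserves the BEC structure: each polarized channel $W_N^{(i)}$ is itself a BEC, say with erasure probability $\epsilon_N^{(i)}$, so that $I(W_N^{(i)})=1-\epsilon_N^{(i)}$ and $Z(W_N^{(i)})=\epsilon_N^{(i)}$. Hence it suffices to prove the upper bound $Z(W_N^{(i)})=\epsilon_N^{(i)}\leq \sum_{d=d_{min}^{(i)}}^{N}A_N^{(i)}(d)\epsilon^d$, and then~(\ref{equation46}) follows by subtracting from~$1$ and truncating below at~$0$. This last inequality is exactly the BEC specialization of Proposition~\ref{proposition5} (inequality~(\ref{upper_bound_Z}) with $Z(W)=\epsilon$), or equivalently it is the content of~(\ref{equation22}) once one identifies $Z(W_N^{(i)})$ with the channel erasure probability and recalls $P_{BEC}(W_N^{(i)})=\epsilon_N^{(i)}$ for a BEC under SC/ML decoding of the single bit.

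So the key steps, in order, are: (i) invoke that $W_N^{(i)}$ is a BEC and write $I(W_N^{(i)})=1-Z(W_N^{(i)})$ with equality; (ii) apply Proposition~\ref{proposition5} with $Z(W)=\epsilon$ to bound $Z(W_N^{(i)})\leq \sum_{d=d_{min}^{(i)}}^{N}A_N^{(i)}(d)\epsilon^d$; (iii) combine and use $I(W_N^{(i)})\geq 0$ to obtain the $\max\{\cdot,0\}$ form. The main obstacle — really the only subtlety — is justifying step~(i): one must argue that the polarization transform maps a BEC to a BEC at every level, which is Ar{\i}kan's observation that in the BEC the recursions for $W_N^{(2j-1)}$ and $W_N^{(2j)}$ keep the channel in the erasure family (with $\epsilon \mapsto 2\epsilon-\epsilon^2$ and $\epsilon\mapsto\epsilon^2$); given that, the equality $I=1-Z$ transfers from $W$ to $W_N^{(i)}$ and the otherwise-lossy step $\log\frac{2}{1+Z}\leq I$ of~(\ref{I_upper_lower}) is bypassed, which is precisely what lets the $\log$ disappear and yields the tighter bound~(\ref{equation46}).
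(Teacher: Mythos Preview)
Your proposal is correct and essentially the same as the paper's: the paper writes $I(W_N^{(i)})=1-P(W_N^{(i)})$ for the BEC and invokes the union bound~(\ref{equation22}), while you write $I(W_N^{(i)})=1-Z(W_N^{(i)})$ and invoke Proposition~\ref{proposition5}/(\ref{equation22}); for a BEC the erasure probability, the Bhattacharyya parameter, and the single-bit error probability (ties counted as errors) all coincide, so the two routes are identical. Your explicit justification that the polarized channels remain BECs (via Ar{\i}kan's $\epsilon\mapsto 2\epsilon-\epsilon^2$, $\epsilon\mapsto\epsilon^2$ recursions) is exactly the ingredient behind the paper's terse ``for the BEC'' clause.
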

\begin{proof}
Noting that $I\left(W_N^{(i)}\right)=1-P\left(W_N^{(i)}\right)\geq 0$ for the BEC and using Eq. (\ref{equation22}), we obtain the lower bound.
\end{proof}

Then by (\ref{I_upper}), it's straightforward to derive the upper bound of $I(W_N^{(i)})$ for the BEC as
\begin{equation}\label{I_upper_BEC}
I\left(W_N^{(i)}\right)\lesssim \sqrt{1-\epsilon^{2d_{min}^{(i)}}}.
\end{equation}

\begin{figure}[h]
\setlength{\abovecaptionskip}{0.cm}
\setlength{\belowcaptionskip}{-0.cm}
  \centering{\includegraphics[scale=0.7]{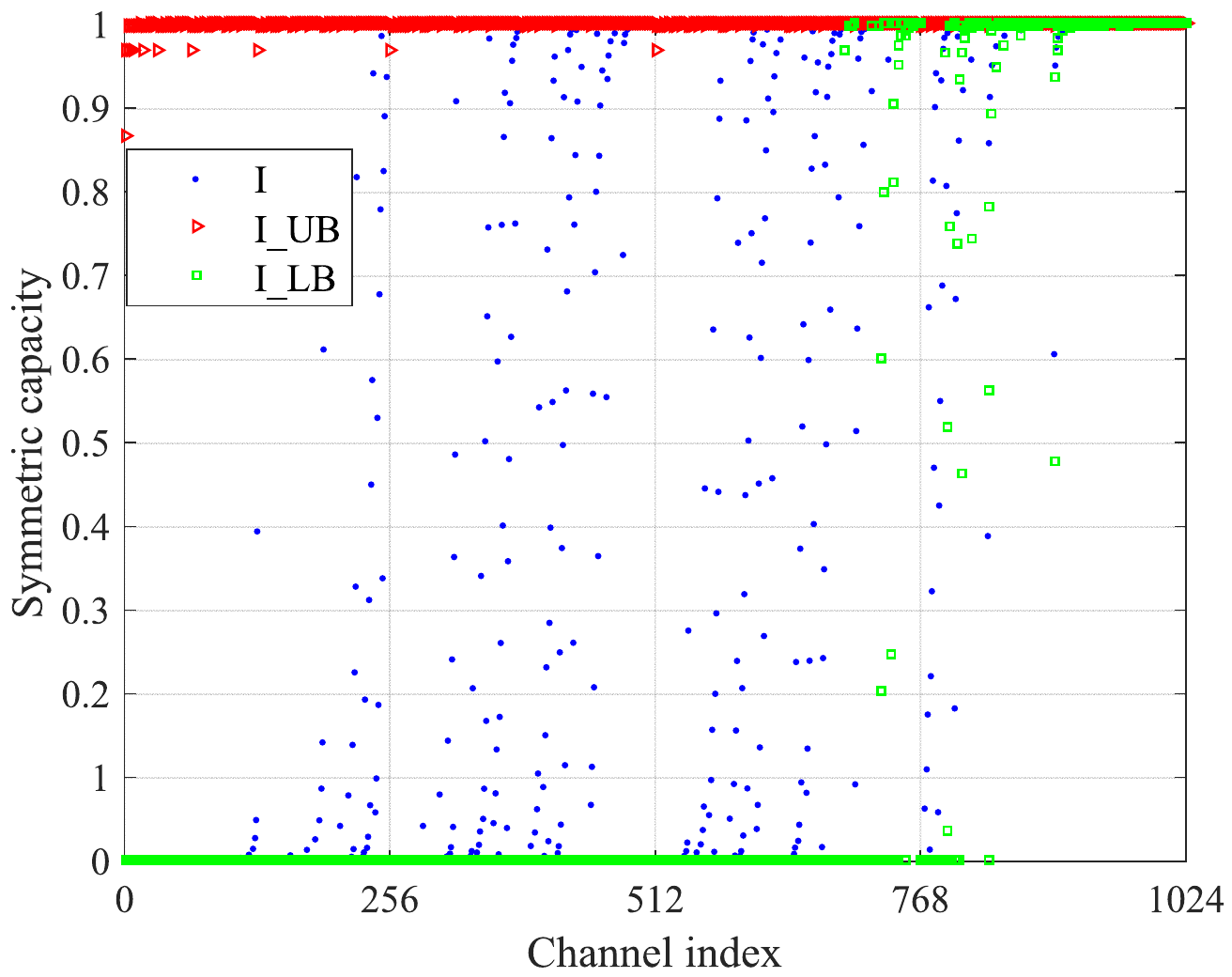}}
  \caption[c]{The upper/lower bounds of the symmetric capacity for a BEC with the erasure probability $\epsilon=0.5$ and the code length $N=1024$.}\label{fig_Channel_polarization}
\end{figure}
Fig. \ref{fig_Channel_polarization} provides the upper/lower bounds of the symmetric capacity of the polarized channels for a BEC with the erasure probability $\epsilon=0.5$ and the code length $N=1024$. The symmetric capacity, marked by ``I'', is evaluated by the recursive relation proposed by Ar{\i}kan (See Eq. (38) in \cite{Polarcode_Arikan}). The upper and lower bounds, marked by ``I\underline{ }UB'' and ``I\underline{ }LB'' respectively, are calculated by (\ref{I_upper_BEC}) and (\ref{equation46}). Although the upper/lower bounds are the coarse approximation of the symmetric capacity, they also reveal the polarization phenomenon.

We establish an analytical framework to evaluate the error performance and the symmetric capacity of polar codes. The polar spectrum is the critical factor to indicate the bounds of the error probability and the mutual information of the polarized channel. So we can use the upper bound of the error probability as a metric to select the information set $\mathcal{A}$.

\section{Calculation of Polar Spectrum}
\label{section_V}
In this section, we will discuss the calculation of polar spectrum. First, we investigate the subcode duality of polar code. Then, by using such duality, we establish the MacWilliams identity between the subcodes. Finally, the enumeration algorithm of the polar spectrum is designed to calculate the polar weight enumerator.
\subsection{Subcode Duality}
Let $i \in \llbracket N/2+1, N\rrbracket$ denote the row index of the matrix $\mathbf{F}_N$. By the definition of subcode $\mathbb{C}_N^{(i)}$, its generator matrix $\mathbf{G}_{\mathbb{C}_N^{(i)}}$ is composed of the rows (from the $i$-th to the $N$-th row) of the matrix $\mathbf{F}_N$, that is, $\mathbf{G}_{\mathbb{C}_N^{(i)}}=\mathbf{F}_N(i:N)$. Further, we introduce another subcode $\mathbb{C}_N^{(N+2-i)}$, whose generator matrix satisfies $\mathbf{G}_{\mathbb{C}_N^{(N+2-i)}}=\mathbf{F}_N(N+2-i:N)$. Thus, subcode $\mathbb{C}_N^{(i)}$ is a linear code $(N,N-i+1)$ and its code rate is $R_{\mathbb{C}_N^{(i)}}=\frac{N-i+1}{N}=1-\frac{i-1}{N}$. Similarly, subcode $\mathbb{C}_N^{(N+2-i)}$ is a linear code $(N,i-1)$ and the code rate is $R_{\mathbb{C}_N^{(N+2-i)}}=\frac{i-1}{N}$.

\begin{theorem}\label{theorem10}
Given $N/2+1\leq i\leq N$, subcode $\mathbb{C}_N^{(N+2-i)}$ is the dual code of subcode $\mathbb{C}_N^{(i)}$, that is, $\mathbb{C}_N^{(N+2-i)}=\mathbb{C}_N^{\bot(i)}$. Especially, $\mathbb{C}_N^{(N/2+1)}$ is a self-dual code.
\end{theorem}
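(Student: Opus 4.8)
The plan is to exhibit an explicit description of the two generator matrices and show that every row of $\mathbf{G}_{\mathbb{C}_N^{(N+2-i)}}$ is orthogonal to every row of $\mathbf{G}_{\mathbb{C}_N^{(i)}}$ under the inner product $(\mathbf{a}\cdot\mathbf{b})=\mathbf{a}\mathbf{b}^T$, and then invoke a dimension count. Recall $\mathbf{G}_{\mathbb{C}_N^{(i)}}=\mathbf{F}_N(i:N)$ consists of rows $i,i+1,\dots,N$ of $\mathbf{F}_N=\mathbf{F}_2^{\otimes n}$, so it has $N-i+1$ rows, while $\mathbf{G}_{\mathbb{C}_N^{(N+2-i)}}=\mathbf{F}_N(N+2-i:N)$ has $i-1$ rows. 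Since $(N-i+1)+(i-1)=N$, the two codes have complementary dimensions, so it suffices to prove $\mathbb{C}_N^{(N+2-i)}\subseteq \mathbb{C}_N^{\bot(i)}$, i.e. that $\mathbf{F}_N(N+2-i:N)\,\mathbf{F}_N(i:N)^T=\mathbf{0}$; the reverse inclusion and equality then follow from $\dim\mathbb{C}_N^{\bot(i)}=N-(N-i+1)=i-1=\dim\mathbb{C}_N^{(N+2-i)}$.

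The key algebraic fact I would establish first is a closed form for $\mathbf{F}_N\mathbf{F}_N^T$. Working over $\mathbb{F}_2$, $\mathbf{F}_2\mathbf{F}_2^T=\left[\begin{smallmatrix}1&1\\1&0\end{smallmatrix}\right]$, and since the transpose and Kronecker product commute, $\mathbf{F}_N\mathbf{F}_N^T=(\mathbf{F}_2\mathbf{F}_2^T)^{\otimes n}=\left[\begin{smallmatrix}1&1\\1&0\end{smallmatrix}\right]^{\otimes n}$. One can then check — by induction on $n$, using the block structure of the Kronecker product — that the $(j,k)$ entry of $\left[\begin{smallmatrix}1&1\\1&0\end{smallmatrix}\right]^{\otimes n}$ is an \emph{anti-triangular} pattern: indexing rows/columns from $1$ to $N$, the entry is $0$ whenever $j+k>N+1$ and the anti-diagonal $j+k=N+1$ carries all ones. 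Equivalently, $\mathbf{F}_N\mathbf{F}_N^T$ has zeros strictly below its anti-diagonal. Granting this, the inner product of row $j$ of $\mathbf{F}_N$ with row $k$ of $\mathbf{F}_N$ vanishes whenever $j+k>N+1$. Now for $\mathbb{C}_N^{(N+2-i)}$ we take rows $j\in\llbracket N+2-i, N\rrbracket$ and for $\mathbb{C}_N^{(i)}$ rows $k\in\llbracket i,N\rrbracket$; then $j+k\geq (N+2-i)+i=N+2>N+1$, so all these inner products are zero, which is exactly $\mathbf{F}_N(N+2-i:N)\,\mathbf{F}_N(i:N)^T=\mathbf{0}$. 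Combined with the dimension count, this gives $\mathbb{C}_N^{(N+2-i)}=\mathbb{C}_N^{\bot(i)}$. The self-dual case $i=N/2+1$ is then immediate: $N+2-i=N/2+1=i$, so $\mathbb{C}_N^{(N/2+1)}=\mathbb{C}_N^{\bot(N/2+1)}$, and the dimension $N-i+1=N/2$ is indeed half of $N$, consistent with self-duality.

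The main obstacle is proving the anti-triangularity of $\mathbf{F}_N\mathbf{F}_N^T$ cleanly. I would do this by induction: write $\mathbf{F}_N\mathbf{F}_N^T=\mathbf{F}_2\mathbf{F}_2^T\otimes(\mathbf{F}_{N/2}\mathbf{F}_{N/2}^T)$, so it is the $2\times2$ block matrix $\left[\begin{smallmatrix}M&M\\M&0\end{smallmatrix}\right]$ with $M=\mathbf{F}_{N/2}\mathbf{F}_{N/2}^T$ of size $N/2$. If $M$ is zero strictly below its anti-diagonal (inductive hypothesis), then in the full matrix a position $(j,k)$ with $j+k>N+1$ lies either in the bottom-right zero block, or in one of the three copies of $M$ where the local anti-diagonal condition is inherited from $j+k>N+1$; a short case analysis on which block $(j,k)$ falls into finishes it. An alternative, perhaps slicker, route is to avoid the explicit pattern altogether: show directly that row $k$ of $\mathbf{F}_N$ is supported on coordinates whose index set, in binary, is "covered" by that of $k$ (a standard fact about $\mathbf{F}_2^{\otimes n}$), and then argue the dot product of rows $j$ and $k$ counts coordinates covered by both, which forces a parity/containment contradiction when $j+k>N+1$ — but the inductive block computation is the most self-contained and is what I would present.
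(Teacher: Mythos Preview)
Your argument is correct, and the inductive verification of the anti-triangularity of $\mathbf{F}_N\mathbf{F}_N^T=\left[\begin{smallmatrix}1&1\\1&0\end{smallmatrix}\right]^{\otimes n}$ goes through exactly as you sketch (the case $j\leq N$, $k\leq N$ cannot occur when $j+k>2N+1$, and the other three blocks reduce cleanly to the inductive hypothesis or are identically zero). Combined with the dimension count, this yields the duality.

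The paper, however, proceeds differently: it runs the induction at the \emph{codeword} level via the Plotkin structure rather than at the Gram-matrix level. Assuming $\mathbb{C}_N^{(N+2-i)}=\mathbb{C}_N^{\bot(i)}$, it writes a generic $\mathbf{r}\in\mathbb{C}_{2N}^{(l)}$ (for $N+1\leq l\leq 2N$) as $(\mathbf{t},\mathbf{t})$ with $\mathbf{t}\in\mathbb{C}_N^{(l-N)}$, and a generic $\mathbf{w}\in\mathbb{C}_{2N}^{(2N+2-l)}$ as $(\mathbf{a}+\mathbf{b},\mathbf{b})$ with $\mathbf{a}\in\mathbb{C}_N^{(2N+2-l)}$, $\mathbf{b}\in\mathbb{C}_N^{(1)}$; then $(\mathbf{r}\cdot\mathbf{w})=\mathbf{t}\mathbf{a}^T=0$ by the inductive hypothesis. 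Your approach is more algebraic and yields a reusable structural fact---the full anti-triangular pattern of $\mathbf{F}_N\mathbf{F}_N^T$, including the all-ones anti-diagonal---which is stronger than what the theorem requires. The paper's approach stays closer to the $[\mathbf{u}+\mathbf{v}\,|\,\mathbf{v}]$ decomposition it uses elsewhere (e.g., in the enumeration algorithm), so it dovetails with the surrounding narrative. Both are short; yours is arguably cleaner as a stand-alone lemma, while the paper's keeps the induction variable aligned with the recursive code construction.
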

\begin{proof}
We can use mathematical induction to prove this theorem.
Given the code length $N$ and the index $N/2+1\leq i\leq N$, suppose the conclusion $\mathbb{C}_N^{(N+2-i)}=\mathbb{C}_N^{\bot(i)}$ holds. When the code length is doubled to $2N$ and the index is changed to $N+1\leq l\leq 2N$, we obtain two new subcodes $\mathbb{C}_{2N}^{(l)}$ and $\mathbb{C}_{2N}^{(2N+2-l)}$.

By the Plotkin structure $\left[\mathbf{u}+\mathbf{v} \left| \mathbf{v} \right.\right]$ of polar coding \cite{Polarcode_Arikan}, due to $N+1\leq l\leq 2N$, the subcode $\mathbb{C}_{2N}^{(l)}$ consists of two identical component codes $\mathbb{C}_{N}^{(l-N)}$. That is to say, $\forall \mathbf{r}\in \mathbb{C}_{2N}^{(l)}$, we have $\mathbf{r}=(\mathbf{t},\mathbf{t}), \mathbf{t}\in \mathbb{C}_{N}^{(l-N)}$.

On the other hand, by the Poltkin structure, due to $2\leq 2N+2-l\leq N+1$, the subcode $\mathbb{C}_{2N}^{(2N+2-l)}$ is also regarded as a combination of two component codes: one is $\mathbb{C}_N^{(2N+2-l)}$ and the other is $\mathbb{C}_N^{(1)}$. That is, for every $\mathbf{w}\in \mathbb{C}_{2N}^{(2N+2-l)}$, it can be written as $\mathbf{w}=(\mathbf{a}+\mathbf{b},\mathbf{b})$, where $\mathbf{a}\in \mathbb{C}_N^{(2N+2-l)}$ and $\mathbf{b}\in \mathbb{C}_N^{(1)}$.

Now we calculate the inner product of $\mathbf{r}$ and $\mathbf{w}$, that is,
\begin{equation}
\left({\mathbf{r}}\cdot {\mathbf{w}}\right) = \left({\mathbf{t}},{\mathbf{t}}\right)
\left( \begin{array}{*{20}{c}}
  {\mathbf{a}}^T + {\mathbf{b}}^T \\
  {\mathbf{b}}^T
\end{array}\right)
={\mathbf{t}}{\mathbf{a}}^T.
\end{equation}
Due to ${\mathbf{t}}\in \mathbb{C}_N^{(i)}$ and ${\mathbf{a}}\in \mathbb{C}_N^{\bot(i)}$, we have ${\mathbf{t}}{\mathbf{a}}^T=0$. Thus, it follows that $\mathbb{C}_{2N}^{(2N+2-l)}=\mathbb{C}_{2N}^{\bot(l)}$.
\end{proof}

Let $S_N^{(i)}(j)(1\leq j\leq N)$ denote the weight enumerators of the subcode $\mathbb{C}_N^{(i)}$, where $j$ is the Hamming weight of non-zero codeword of codebook $\mathbb{C}_N^{(i)}$. Similarly, $S_N^{\bot( i)}(j)$ denote the weight enumerators of the dual code $\mathbb{C}_N^{\bot (i)}$.
\begin{proposition}\label{proposition8}
Given the subcode $\mathbb{C}_N^{(i)}$, we have $\mathbb{C}_N^{(i)}=\mathbb{D}_N^{(i)}\bigcup \mathbb{C}_N^{(i+1)}$. Thus, the weight enumerator and the polar weight enumerator satisfy $S_N^{(i)}(j)=A_N^{(i)}(j)+S_N^{(i+1)}(j)$.
\end{proposition}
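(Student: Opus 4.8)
The plan is to first establish the set-theoretic identity $\mathbb{C}_N^{(i)}=\mathbb{D}_N^{(i)}\cup \mathbb{C}_N^{(i+1)}$ as a \emph{disjoint} union, and then to convert this partition of the codeword set into the additive relation among the weight enumerators by a direct counting argument.

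For the set identity I would split the generator description of $\mathbb{C}_N^{(i)}$ in Definition \ref{definition1} according to the value of the $i$-th source coordinate. Any $\mathbf{c}\in\mathbb{C}_N^{(i)}$ has the form $\mathbf{c}=\left(0_1^{(i-1)},u_i,u_{i+1}^N\right)\mathbf{F}_N$ with $u_i\in\{0,1\}$. The choice $u_i=0$ yields exactly $\mathbf{c}=\left(0_1^{i},u_{i+1}^N\right)\mathbf{F}_N$ ranging over $\mathbb{C}_N^{(i+1)}$, and the choice $u_i=1$ yields exactly $\mathbf{c}=\left(0_1^{(i-1)},1,u_{i+1}^N\right)\mathbf{F}_N$ ranging over $\mathbb{D}_N^{(i)}$; hence $\mathbb{C}_N^{(i)}=\mathbb{D}_N^{(i)}\cup\mathbb{C}_N^{(i+1)}$. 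Equivalently, $\mathbb{C}_N^{(i+1)}$ is a linear subcode of $\mathbb{C}_N^{(i)}$ of dimension one less, and $\mathbb{D}_N^{(i)}$ is the nontrivial coset $\mathbf{g}_i\oplus\mathbb{C}_N^{(i+1)}$ with $\mathbf{g}_i$ the $i$-th row of $\mathbf{F}_N$; this also re-identifies $\mathbb{C}_N^{(i+1)}$ with the complement set $\mathbb{E}_N^{(i)}$ of Definition \ref{definition1}.

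For disjointness I would invoke that $\mathbf{F}_N=\mathbf{F}_2^{\otimes n}$ is invertible over $\mathbb{F}_2$ (since $\mathbf{F}_2$ is), so $u_1^N\mapsto u_1^N\mathbf{F}_N$ is a bijection on $\mathcal{X}^N$; therefore source vectors differing in coordinate $i$ cannot map to the same codeword, giving $\mathbb{D}_N^{(i)}\cap\mathbb{C}_N^{(i+1)}=\emptyset$ (and, in particular, $0_1^N\notin\mathbb{D}_N^{(i)}$). Alternatively one may just note that distinct cosets of $\mathbb{C}_N^{(i+1)}$ are disjoint, or count sizes using the already-stated cardinalities: $\left|\mathbb{C}_N^{(i)}\right|=2^{N-i+1}=2^{N-i}+2^{N-i}=\left|\mathbb{D}_N^{(i)}\right|+\left|\mathbb{C}_N^{(i+1)}\right|$.

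Finally, fixing any Hamming weight $j\in\llbracket 1,N\rrbracket$ and counting the weight-$j$ codewords of $\mathbb{C}_N^{(i)}$ through the disjoint union gives $S_N^{(i)}(j)=A_N^{(i)}(j)+S_N^{(i+1)}(j)$, the claimed recursion (the restriction $j\ge 1$ sidesteps any all-zero bookkeeping, since $\mathbb{D}_N^{(i)}$ contributes no zero codeword). The argument is essentially pure bookkeeping; the single step that deserves care — and the only place a slip could creep in — is the disjointness of the union, i.e.\ the full-row-rank property of the generator matrices that makes the codeword counts agree with Definition \ref{definition1}.
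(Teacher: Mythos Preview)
Your proposal is correct and follows essentially the same approach as the paper: the paper's proof simply recalls from Definition~\ref{definition1} that $\mathbb{C}_N^{(i)}=\mathbb{D}_N^{(i)}\cup\mathbb{E}_N^{(i)}$ (disjoint by construction), asserts that $\mathbb{E}_N^{(i)}=\mathbb{C}_N^{(i+1)}$, and concludes. Your argument is a more explicit version of the same idea---you spell out why $\mathbb{E}_N^{(i)}=\mathbb{C}_N^{(i+1)}$ via the $u_i\in\{0,1\}$ case split and take extra care with disjointness through the invertibility of $\mathbf{F}_N$, whereas the paper leaves both points implicit.
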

\begin{proof}
Recall that the $i$-th subcode satisfies $\mathbb{C}_N^{(i)}=\mathbb{D}_N^{(i)}\bigcup \mathbb{E}_N^{(i)}$. Since $\mathbb{E}_N^{(i)}=\mathbb{C}_N^{(i+1)}$, then we have $\mathbb{C}_N^{(i)}=\mathbb{D}_N^{(i)}\bigcup \mathbb{C}_N^{(i+1)}$. It follows the relationship between the weight enumerator and the polar weight enumerator.
\end{proof}

\begin{proposition}\label{proposition9}
The odd codeword weight of the subcode $\mathbb{C}_N^{(i)}, i\in \llbracket 2,N \rrbracket$ is zero, that is, $S_N^{(i)}(2j+1)=0$. Similarly, for the polar subcode $\mathbb{D}_N^{(i)},i\in \llbracket 2,N \rrbracket$, we have $A_N^{(i)}(2j+1)=0$. For the polar subcode $\mathbb{D}_N^{(1)}$, the even codeword weight is zero, that is $A_N^{(1)}(2j)=0$.
\end{proposition}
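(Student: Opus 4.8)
The plan is to reduce all three claims to a single fact about the parity of a codeword weight in terms of its information vector. Over $\mathbb{F}_2$ the Hamming weight of a binary vector $\mathbf{c}$ of length $N$ satisfies $w_H(\mathbf{c})\equiv \mathbf{c}\cdot\mathbf{1}_N^T \pmod 2$, where $\mathbf{1}_N$ is the all-one vector. Hence for a codeword $\mathbf{c}=\mathbf{u}\mathbf{F}_N$ we have $w_H(\mathbf{c})\equiv \mathbf{u}\,\mathbf{F}_N\mathbf{1}_N^T \pmod 2$, so everything hinges on the single vector $\mathbf{F}_N\mathbf{1}_N^T$, i.e. on the parities of the row weights of $\mathbf{F}_N$.

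First I would establish the auxiliary fact that, over $\mathbb{F}_2$, $\mathbf{F}_N\mathbf{1}_N^T=(1,0,\dots,0)^T$; equivalently, the first row of $\mathbf{F}_N$ has odd Hamming weight and every other row has even Hamming weight. This is cleanest from $\mathbf{F}_N=\mathbf{F}_2^{\otimes n}$: the row of $\mathbf{F}_N$ indexed by $b\in\{0,1\}^n$ is a Kronecker product of rows of $\mathbf{F}_2$, hence has weight $\prod_{k=1}^n w_H(\text{row }b_k\text{ of }\mathbf{F}_2)=2^{w_H(b)}$, which is odd precisely when $b=0\cdots0$, i.e. for the first row only. (Alternatively, an induction on $n$ using $\mathbf{F}_{2N}=\left[\begin{smallmatrix}\mathbf{F}_N & \mathbf{0}\\ \mathbf{F}_N & \mathbf{F}_N\end{smallmatrix}\right]$ works: doubling sends each row either to its zero-padded copy (weight unchanged) or to its self-concatenation (weight doubled, hence even), so the induction hypothesis propagates.)

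Combining the two steps, $w_H(\mathbf{u}\mathbf{F}_N)\equiv u_1\pmod 2$. Now I would finish by inspecting the three codebooks. For $\mathbb{C}_N^{(i)}$ with $i\in\llbracket 2,N\rrbracket$ every codeword is $(0_1^{i-1},u_i^N)\mathbf{F}_N$ with $u_1=0$ (since $i-1\ge 1$), hence has even weight, so $S_N^{(i)}(2j+1)=0$. Because $\mathbb{D}_N^{(i)}\subseteq\mathbb{C}_N^{(i)}$ (indeed $\mathbb{C}_N^{(i)}=\mathbb{D}_N^{(i)}\cup\mathbb{E}_N^{(i)}$ by Definition~\ref{definition1}), the same parity conclusion holds on $\mathbb{D}_N^{(i)}$, giving $A_N^{(i)}(2j+1)=0$ for $i\ge 2$. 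Finally, every codeword of $\mathbb{D}_N^{(1)}$ is $(1,u_2^N)\mathbf{F}_N$ with $u_1=1$, hence has odd weight, so $A_N^{(1)}(2j)=0$.

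The only nontrivial ingredient is the auxiliary row-weight-parity fact for $\mathbf{F}_N$, which I expect to be the main (and essentially only) obstacle; it is dispatched either by the explicit $2^{w_H(b)}$ weight formula for Kronecker powers or by the one-line induction above. The rest is bookkeeping with the parity functional $\mathbf{c}\mapsto\mathbf{c}\cdot\mathbf{1}_N^T$ together with the description of the three codebooks in Definition~\ref{definition1}.
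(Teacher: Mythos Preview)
Your proof is correct and takes a genuinely different route from the paper. The paper proceeds by induction on $n$ via the Plotkin structure $[\mathbf{u}+\mathbf{v}\,|\,\mathbf{v}]$: it verifies the base case $N=2$ by hand, then splits the index range for $\mathbb{C}_{2N}^{(i)}$ into three subcases ($N+1\le i\le 2N$, $2\le i\le N$, $i=1$) and tracks parity of codeword weights through the recursion. Your argument instead linearizes the problem in one stroke: the parity of $w_H(\mathbf{u}\mathbf{F}_N)$ is the $\mathbb{F}_2$-linear functional $\mathbf{u}\mapsto \mathbf{u}\,(\mathbf{F}_N\mathbf{1}_N^T)$, and the row-weight formula $2^{w_H(b)}$ for $\mathbf{F}_2^{\otimes n}$ immediately gives $\mathbf{F}_N\mathbf{1}_N^T=(1,0,\dots,0)^T$, so $w_H(\mathbf{u}\mathbf{F}_N)\equiv u_1\pmod 2$. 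This is shorter and more conceptual; it also makes transparent \emph{why} the threshold is exactly at $i=1$ versus $i\ge 2$ (the first row of $\mathbf{F}_N$ is the unique odd-weight row). The paper's inductive proof, on the other hand, stays closer to the recursive machinery already developed for the enumeration algorithm, so it dovetails with the surrounding material even if it is less economical for this particular proposition.
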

\begin{proof}
Recall that ${{\bf{F}}_N} = {\bf{F}}_2^{ \otimes n}$, then we can obtain the conclusions by the following induction argument due to the Plotkin structure $[\bf{u}+\bf{v}|\bf{v}]$.
\begin{itemize}
  \item [1)] Considering $n=1$, that is ${{\bf{F}}_2} = \left[ { \begin{smallmatrix} 1 & 0 \\ 1 &  1 \end{smallmatrix} } \right]$, it's easy to obtain that $\begin{array}{*{20}{c}}
        j&0&1&2\\
        \hline
        {S_2^{(1)}(j)}&1&2&1\\
        \hline
        {S_2^{(2)}(j)}&1&0&1
        \end{array}$ and
        $\begin{array}{*{20}{c}}
        j&0&1&2\\
        \hline
        {A_2^{(1)}(j)}&0&2&0\\
        \hline
        {A_2^{(2)}(j)}&0&0&1
        \end{array}$. Obviously, the proposition holds under $n=1$.
  \item [2)] Assume the proposition holds under ${{\bf{F}}_N} = {\bf{F}}_2^{ \otimes n}$, where $n$ is an arbitrary integer that satisfies $n\geq2$, 
      then consider the situation of ${{\bf{F}}_{2N}} = {\bf{F}}_2^{ \otimes (n+1)}$. Due to the matrix structure ${\mathbf{F}}_{2N}= \left[ { \begin{smallmatrix} {\mathbf{F}}_{N} & \mathbf{0} \\ {\mathbf{F}}_{N} &  {\mathbf{F}}_{N} \end{smallmatrix} } \right]$, the discussion can be divided into three parts.
  \item [3)] First, in the case of $N+1\leq i\leq2N$, it's straight to know that $S_{2N}^{(i)}(2j)=S_N^{(i-N)}(j)$ and $A_{2N}^{(i)}(2j)=A_N^{(i-N)}(j)$, where $j \in \llbracket 0, N\rrbracket$. Thus the odd weight enumerators of the subcode $\mathbb{C}_{2N}^{(i)}$ and polar subcode $\mathbb{D}_{2N}^{(i)}$ are zero.
  \item [4)] Second, in the case of $2\leq i\leq N$, consider $i=N$ first, for every codeword ${\bf{w}}\in\mathbb{C}_{2N}^{(N)}$, it can be written as ${\bf{w}}={\bf{a}}+{\bf{b}}$, where ${\bf{a}}=(\mathbf{t},\mathbf{0}), {\mathbf{t}}\in {\mathbb{C}}_{N}^{(N)}$ and ${\bf{b}} \in {\mathbb{C}}_{2N}^{(N+1)}$. From 2) and 3), it can be concluded that every ${\bf{w}}\in\mathbb{C}_{2N}^{(N)}$ has the even codeword weight since both ${\bf{a}}$ and ${\bf{b}}$ have the even weight. Next consider $i=N-1$, every codeword ${\bf{w}}\in\mathbb{C}_{2N}^{(N-1)}$ also can be written as ${\bf{w}}={\bf{a}}+{\bf{b}}$, where ${\bf{a}}=(\mathbf{t},\mathbf{0}), {\mathbf{t}}\in {\mathbb{C}}_{N}^{(N-1)}$ and ${\bf{b}} \in {\mathbb{C}}_{2N}^{(N)}$. From 2) and above analysis for $i=N$, it's easy to know every ${\bf{w}}\in\mathbb{C}_{2N}^{(N-1)}$ has the even codeword weight. And so on, same conclusions can be drawn when $i=N-2,N-3,\cdots,2$. By using Proposition \ref{proposition8}, it can be also derived that $A_{2N}^{(i)}(2j+1)=0$.
   \item [5)] Finally, for $i=1$, every codeword ${\bf{w}}\in\mathbb{D}_{2N}^{(1)}$ can be written as ${\bf{w}}={\bf{c}}+{\bf{d}}$, where ${\bf{c}} = \left( {1,0,0, \cdots ,0} \right)$ is a $2N$-length vector with codeword weight one, and ${\bf{d}} \in {\mathbb{C}}_{2N}^{(2)}$ has the even weight from 4), whereby $\bf{w}$ has the odd codeword weight, i.e., $A_{2N}^{(1)}(2j)=0$.
      So, the proposition holds for ${{\bf{F}}_{2N}} = {\bf{F}}_2^{ \otimes (n+1)}$.
\end{itemize}
\end{proof}

\begin{proposition}\label{proposition10}
The weight distribution of the subcode $\mathbb{C}_N^{(i)}$ is symmetric, that is, $S_N^{(i)}(j)=S_N^{(i)}(N-j)$. Similar result holds for the polar subcode $\mathbb{D}_N^{(i)}$, that is, $A_N^{(i)}(j)=A_N^{(i)}(N-j)$.
\end{proposition}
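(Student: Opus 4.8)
The plan is to prove both symmetries by exhibiting a single weight-complementing involution, namely $\phi(\mathbf{c})=\mathbf{c}\oplus 0_1^{N}\oplus\cdots$ — more precisely, addition of the all-ones vector $\mathbf{1}_1^{N}$ — and checking in each case that it stabilizes the relevant codebook. The one structural fact that makes this run is that $\mathbf{1}_1^{N}$ always belongs to the subcode $\mathbb{C}_N^{(i)}$: since $\mathbf{F}_N=\mathbf{F}_2^{\otimes n}$, the $N$-th row of $\mathbf{F}_N$ is $(1,1)\otimes\cdots\otimes(1,1)=\mathbf{1}_1^{N}$ (equivalently, a one-line induction on the Plotkin structure $\mathbf{F}_{2N}=\left[{\begin{smallmatrix}\mathbf{F}_N & \mathbf{0}\\ \mathbf{F}_N & \mathbf{F}_N\end{smallmatrix}}\right]$ shows the last row stays all-ones), and the generator matrix $\mathbf{G}_{\mathbb{C}_N^{(i)}}=\mathbf{F}_N(i:N)$ contains that row for every $i\le N$. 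Also note $w_H(\mathbf{c}\oplus\mathbf{1}_1^{N})$ equals the number of zero coordinates of $\mathbf{c}$, so $w_H(\phi(\mathbf{c}))=N-w_H(\mathbf{c})$.

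For the subcode claim, I would argue as follows. Because $\mathbf{1}_1^{N}\in\mathbb{C}_N^{(i)}$ and $\mathbb{C}_N^{(i)}$ is a linear code, $\phi$ maps $\mathbb{C}_N^{(i)}$ into itself; as $\phi$ is an involution it is a bijection of $\mathbb{C}_N^{(i)}$. Since $\phi$ turns weight $j$ into weight $N-j$, it restricts to a bijection between the set of weight-$j$ codewords and the set of weight-$(N-j)$ codewords of $\mathbb{C}_N^{(i)}$, hence $S_N^{(i)}(j)=S_N^{(i)}(N-j)$.

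For the polar-subcode claim, the point is to put $\mathbf{1}_1^{N}$ on the "frozen" side rather than inside $\mathbb{D}_N^{(i)}$. For $i\le N-1$ we have $\mathbf{1}_1^{N}=(0_1^{N-1},1)\mathbf{F}_N$, whose information vector has a $0$ in coordinate $i$, so $\mathbf{1}_1^{N}\in\mathbb{E}_N^{(i)}=\mathbb{C}_N^{(i+1)}$. Then Proposition \ref{proposition2}, applied with $\mathbf{c}^{(0)}=\mathbf{1}_1^{N}$, gives $\mathbf{c}^{(1)}\oplus\mathbf{1}_1^{N}\in\mathbb{D}_N^{(i)}$ for every $\mathbf{c}^{(1)}\in\mathbb{D}_N^{(i)}$; that is, $\phi$ restricts to a weight-reversing involution of $\mathbb{D}_N^{(i)}$, whence $A_N^{(i)}(j)=A_N^{(i)}(N-j)$. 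The only leftover index is $i=N$, for which $\mathbb{D}_N^{(N)}=\{\mathbf{1}_1^{N}\}$ is a single-codeword set and is handled by inspection.

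I do not expect a real obstacle here; the only delicate bookkeeping is exactly the dichotomy just used — $\mathbf{1}_1^{N}$ sits in $\mathbb{C}_N^{(i)}$ (which is what makes $\phi$ preserve $\mathbb{C}_N^{(i)}$) yet lies in $\mathbb{E}_N^{(i)}$, not in $\mathbb{D}_N^{(i)}$ (which is why adding it keeps the distinguished bit equal to $1$, so $\phi$ preserves $\mathbb{D}_N^{(i)}$ by Proposition \ref{proposition2}). An alternative derivation of the $\mathbb{C}_N^{(i)}$ part would combine Proposition \ref{proposition9} with the MacWilliams identities of Section \ref{section_V}, but the involution argument is more direct and keeps the proof self-contained.
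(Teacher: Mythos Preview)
Your argument is correct. For the subcode part you and the paper use the same idea --- the all-ones vector lies in $\mathbb{C}_N^{(i)}$, so complementation is a weight-reversing bijection --- except that the paper outsources the conclusion to a textbook theorem while you spell out the involution explicitly.

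The genuine difference is in the polar-subcode part. The paper derives $A_N^{(i)}(j)=A_N^{(i)}(N-j)$ indirectly from Proposition~\ref{proposition8}: since $A_N^{(i)}(j)=S_N^{(i)}(j)-S_N^{(i+1)}(j)$ and both $S_N^{(i)}$ and $S_N^{(i+1)}$ have just been shown symmetric, their difference is symmetric. You instead show that the same involution $\phi$ already preserves $\mathbb{D}_N^{(i)}$, by observing that $\mathbf{1}_1^N\in\mathbb{E}_N^{(i)}$ for $i\le N-1$ and invoking Proposition~\ref{proposition2}. Your route is conceptually cleaner --- one map handles both statements, no subtraction and no external citation --- at the cost of having to isolate the boundary case $i=N$. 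The paper's route is a one-liner once Proposition~\ref{proposition8} and the $\mathbb{C}_N^{(i)}$ symmetry are in hand. Both are perfectly valid; the bookkeeping point you flag (that $\mathbf{1}_1^N$ sits in $\mathbb{E}_N^{(i)}$ rather than $\mathbb{D}_N^{(i)}$) is exactly what makes your direct argument go through.
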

\begin{proof}
Since all-one codeword is always contained in the set $\mathbb{C}_N^{(i)}$, we can easily obtain that $S_N^{(i)}(j)=S_N^{(i)}(N-j)$ by Theorem 1.4.5 (iii) in \cite{Book_Huffman}. Then by Proposition \ref{proposition8}, it can be also derived that $A_N^{(i)}(j)=A_N^{(i)}(N-j)$.

\end{proof}
\subsection{MacWilliams Identities of Subcodes}
It is well known that the linear relations between the weight distributions of a linear code and its dual can be determined by the MacWilliams identities \cite{MacWilliams}. That means, if we know the weight distribution of one linear code, we can obtain the weight distribution of the dual code without knowing the special coding structure. These identities have become the most significant tools to investigate and calculate the weight distributions.

\begin{theorem}\label{theorem11}
Given the subcode $\mathbb{C}_N^{(i)}$ and its dual $\mathbb{C}_N^{\bot(i)}=\mathbb{C}_N^{(N+2-i)}$, the weight enumerators $S_N^{(i)}(j)$ and $S_N^{\bot(i)}(j)$ satisfy the following MacWilliams identities \cite{MacWilliams}
\begin{equation}
\begin{aligned}
&\sum\limits_{j=0}^N \left(\begin{array}{*{20}{c}}N-j\\k\end{array}\right) S_N^{\bot(i)}(j)=2^{i-1-k} \sum\limits_{j=0}^N \left(\begin{array}{*{20}{c}}N-j\\N-k\end{array}\right)S_N^{(i)}{( j)},
\end{aligned}
\end{equation}
where $k \in \llbracket 0,N\rrbracket$.
\end{theorem}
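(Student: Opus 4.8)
The plan is to invoke the classical MacWilliams identity in its combinatorial (Krawtchouk-free) form and specialize it to the pair $\mathbb{C}_N^{(i)}$ and $\mathbb{C}_N^{\bot(i)}$. Recall from Theorem~\ref{theorem10} that for $N/2+1\le i\le N$ the subcode $\mathbb{C}_N^{(N+2-i)}$ is exactly the dual of $\mathbb{C}_N^{(i)}$; this is the structural fact that makes the identity applicable. Since $\mathbb{C}_N^{(i)}$ is a binary linear $(N,\,N-i+1)$ code, its dual has dimension $N-(N-i+1)=i-1$, so $\bigl|\mathbb{C}_N^{\bot(i)}\bigr|=2^{i-1}$. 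The standard MacWilliams relation states that the weight enumerator of the dual equals $|C|^{-1}$ times the MacWilliams transform of the weight enumerator of $C$; written in the ``binomial moment'' form (e.g.\ Theorem~1.4.5 in \cite{Book_Huffman}, or directly from \cite{MacWilliams}), it reads, for every $k\in\llbracket 0,N\rrbracket$,
\begin{equation}\nonumber
\sum_{j=0}^N \binom{N-j}{k}\,B(j)=\frac{1}{|C|}\sum_{j=0}^N \binom{N-j}{N-k}\,A(j),
\end{equation}
where $A(\cdot)$ and $B(\cdot)$ are the weight enumerators of $C$ and $C^{\bot}$ respectively.

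First I would set $C=\mathbb{C}_N^{(i)}$, so $A(j)=S_N^{(i)}(j)$, $B(j)=S_N^{\bot(i)}(j)$, and $|C|=2^{N-i+1}$. Substituting $|C|^{-1}=2^{-(N-i+1)}=2^{i-1-N}$ into the right-hand side and then absorbing the factor $2^{i-1-N}$ together with the implicit powers of $2$ coming from the binomial-moment normalization gives the exponent $i-1-k$ claimed in the statement; concretely, the $2^{i-1-N}$ prefactor combined with the standard $2^{N-k}$ that appears when one writes the MacWilliams transform in terms of $\binom{N-j}{N-k}$ rather than Krawtchouk polynomials collapses to $2^{i-1-k}$. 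This is essentially bookkeeping of the normalization constant, but it is the one place where an off-by-a-power-of-two error is easy to make, so I would verify it by checking the edge case $k=0$: the left side becomes $\sum_j S_N^{\bot(i)}(j)=|\mathbb{C}_N^{\bot(i)}|=2^{i-1}$, and the right side becomes $2^{i-1}\sum_j \binom{N-j}{N}S_N^{(i)}(j)=2^{i-1}S_N^{(i)}(0)=2^{i-1}$, since $\binom{N-j}{N}$ is nonzero only for $j=0$. The two sides agree, confirming the exponent.

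The only genuine content beyond citing \cite{MacWilliams} is the hypothesis check: one must know that $\mathbb{C}_N^{(N+2-i)}$ really is the dual code, which is precisely Theorem~\ref{theorem10}, and that $\mathbb{C}_N^{(i)}$ has dimension $N-i+1$ (immediate from Definition~\ref{definition1}, since the generator rows $\mathbf{F}_N(i{:}N)$ of the lower-triangular $\mathbf{F}_N$ are linearly independent). Given these, the proof is a one-line application of the classical identity followed by the normalization computation above. I do not expect any real obstacle; the main care point is matching conventions between the various equivalent statements of MacWilliams in the literature so that the constant comes out as $2^{i-1-k}$ and not, say, $2^{i-1-N+k}$ or $2^{N-i-k+1}$ — the $k=0$ sanity check pins this down unambiguously.
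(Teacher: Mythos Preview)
Your proposal is correct and matches the paper's treatment: the paper gives no proof of Theorem~\ref{theorem11} at all, simply stating the identity with a citation to \cite{MacWilliams} and relying on Theorem~\ref{theorem10} for the duality $\mathbb{C}_N^{\bot(i)}=\mathbb{C}_N^{(N+2-i)}$. Your write-up is just a more explicit version of this same one-line application, with the added (and useful) $k=0$ sanity check to pin down the constant $2^{i-1-k}$.
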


These identities are composed of $N+1$ linear equations. By solving these equations, we can calculate the weight distribution of one subcode and its dual.
\subsection{Enumeration Algorithm of Polar Spectrum}
Using the Plotkin structure $[\bf{u+v}|\bf{v}]$ and the MacWilliams identities, we design an iterative enumeration algorithm to calculate the polar spectrum. Given the weight distribution of subcodes $\mathbb{C}_N^{(i)}$, as shown in Algorithm \ref{algorithm1}, this algorithm enumerates the weight distribution of subcodes $\mathbb{C}_{2N}^{(i)}$ and the polar spectrum of polar subcodes $\mathbb{D}_{2N}^{(i)}$.

\begin{algorithm}[h] \label{algorithm1}
\setlength{\abovecaptionskip}{0.cm}
\setlength{\belowcaptionskip}{-0.cm}
\caption{Iterative enumeration algorithm of polar spectrum}
\KwIn {The weight distribution of all the subcodes with the code length $N$, $\left\{S_N^{(i)}(j):i=1,2,...,N,j=0,1,...,N\right\}$;}
\KwOut {The polar spectrum of all the polar subcodes with the code length $2N$, $\left\{A_{2N}^{(l)}(j):l=1,2,...,2N,j=0,1,...,2N\right\}$;}
\For{$i = 1 \to N-1 $}
{
    \For{$j = 0 \to N $}
     {
           Calculate the polar spectrum of the polar subcodes with the code length $N$, $A_N^{(i)}(j)=S_N^{(i)}(j)-S_N^{(i+1)}(j)$\;
     }
}
\For{$j = 0 \to N $}
{
    Calculate the polar spectrum of the polar subcode which only contains an all-one codeword,
    $A_N^{(N)}(j)= 0$, $j = 0,1,...,N-1$ and $A_N^{(N)}(N)= 1$\;
}
\For{$l = N+1 \to 2N $}
{
    \For{$j = 0 \to {N}$}
    {
        Calculate the polar spectrum of the polar subcodes with the code length $2N$, $A_{2N}^{(l)}(2j)=A_{N}^{(l-N)}(j)$\;
        Calculate the weight distribution of the subcodes with the code length $2N$, $S_{2N}^{(l)}(2j)=S_{N}^{(l-N)}(j)$\;
    }
}
\For{$l = 2 \to N $}
{
    Solve the MacWilliams identities and calculate the weight distribution $S_{2N}^{(l)}(j)$
        \begin{equation}
        \begin{aligned}
         &  \sum\limits_{j=0}^{2N} \left(\begin{array}{*{20}{c}}  2N-j\\ k \end{array}\right) S_{2N}^{(2N+2-l)}(j)\\
         &=2^{l-1-k} \sum\limits_{j=0}^{2N} \left(\begin{array}{*{20}{c}} 2N-j\\2N-k \end{array}\right)S_{2N}^{(l)}{(j)},
        \end{aligned}
        \end{equation}
        where $k=0,1,...,2N$\;
}
\For{$l = 2 \to N $}
{      \For{$j = 0 \to 2N $}
      {
            Calculate the polar spectrum $A_{2N}^{(l)}(j)=S_{2N}^{(l)}(j)-S_{2N}^{(l+1)}(j)$\;
      }
}
\For{$j = 0 \to 2N $}
{
    Initialize the weight distribution $S_{2N}^{(1)}(j)=\left(
             \begin{array}{*{20}{c}}
                      2N\\
                       j
             \end{array}\right)$\;
    Calculate the polar spectrum
    $A_{2N}^{(1)}(j)=S_{2N}^{(1)}(j)-S_{2N}^{(2)}(j)$.
}

\end{algorithm}

Based on the iterative structure, Algorithm \ref{algorithm1} is consist of four steps to enumerate the weight distribution and the polar spectrum. In the first step, by using Proposition \ref{proposition8}, the polar spectrum of polar subcodes $\mathbb{D}_N^{(i)}$ with the code length $N$ is calculated. When the code length is grown from $N$ to $2N$, due to the matrix structure ${\mathbf{F}}_{2N}= \left[ { \begin{smallmatrix} {\mathbf{F}}_{N} & \mathbf{0} \\ {\mathbf{F}}_{N} &  {\mathbf{F}}_{N} \end{smallmatrix} } \right]$, the enumeration can be divided into two parts.

That is to say, for the second step, we enumerate the weight distribution and polar spectrum in the case of $N+1 \leq l\leq 2N$. In this case, by using the Plotkin structure $\left[\mathbf{u}+\mathbf{v} \left| \mathbf{v} \right.\right]$, the subcode $\mathbb{C}_{2N}^{(l)}$ is consist of two identical component codes $\mathbb{C}_{N}^{(l-N)}$. Therefore, the subcode $\mathbb{C}_{2N}^{(l)}$ has the same weight distribution as the subcode $\mathbb{C}_{N}^{(l-N)}$. And similarly, the polar subcode $\mathbb{D}_{2N}^{(l)}$ has the same polar spectrum as the polar subcode $\mathbb{D}_{N}^{(l-N)}$. Certainly, the codeword weight of $\mathbb{C}_{2N}^{(l)}$ or $\mathbb{D}_{2N}^{(l)}$ is doubled.

For the third step, we enumerate the weight distribution and polar spectrum in the case of $1 \leq l\leq N$. According to Theorem \ref{theorem10}, the subcode $\mathbb{C}_{2N}^{(l)}$ is the dual of subcode $\mathbb{C}_{2N}^{(2N+2-l)}$. Since the weight distribution of $\mathbb{C}_{2N}^{(2N+2-l)}$ has been obtained in the second step, we can solve the MacWilliams identities to calculate the weight distribution of $\mathbb{C}_{2N}^{(l)}$.

Finally, using Proposition \ref{proposition8}, the polar spectrum of polar subcodes $\mathbb{D}_{2N}^{(l)}$ with the code length $2N$ is calculated. Note that the weight distribution of subcode $\mathbb{C}_{2N}^{(1)}$ obeys the binomial distribution.

The computational complexity of Algorithm \ref{algorithm1} is mainly determined by the solution of MacWilliams identities. Due to the lower-triangle structure of the coefficient matrix in the MacWilliams identities, the weight enumerators can be recursively calculated. Hence, given the code length $N$, the worse complexity of solving MacWilliams identities is $\chi_M(N)=(N+1)^2$. Furthermore, we only consider to solve $N/2-1$ groups of those identities. So the computational complexity is $\chi_E(N)=(N/2-1)(N+1)^2$. If we consider the non-existence of the odd-weight codewords (Proposition \ref{proposition9}) and the symmetry of the weight distribution (Proposition \ref{proposition10}), the worse complexity of enumeration algorithm can be further reduced to $\chi_E(N)=(N/2-1)(N/4+1)^2$. So the total complexity of
Algorithm \ref{algorithm1} is $O(N^3)$.

\begin{example}
Table \ref{polar_spectrum} shows partial results of the polar spectrum for the code length $N=32$. In this example, due to the symmetric distribution of polar spectrum (Proposition \ref{proposition10}), we only provide half of the polar weight distribution (the number inside the bracket is the symmetric weight), e.g. $A_{32}^{(3)}(2)=A_{32}^{(3)}(30)=128$. As shown in this table, due to the duality relationship, $A_{32}^{(2)}$ and $A_{32}^{(32)}$, $A_{32}^{(3)}$ and $A_{32}^{(31)}$, etc. satisfy the MacWilliams identities. Meanwhile, the subcode $\mathbb{C}_{32}^{(17)}$ is a self-dual code. Furthermore, by Proposition \ref{proposition9}, all the polar weight distributions for the index $i\in \llbracket 2,32 \rrbracket$ have non-zero even codeword weight. On the contrary, for the index $i=1$, non-zero codeword weight is odd.
\end{example}

\begin{table}[tp]
\centering
\caption{Polar spectrum example for $N=32$} \label{polar_spectrum}
\begin{tabular}{|c|c|c|c|c|c|}
\hline index $i$ & weight $d$ & $A_{32}^{(i)}(d)$ & index $i$ & weight $d$ & $A_{32}^{(i)}(d)$ \\
\hline   1 &  1(31) &           32 & 32 & 32 & 1 \\
\hline   1 &  3(29) &        4960 & 31 & 16 & 2 \\
\hline   1 &  5(27) &     201376 & 30 & 16 & 4 \\
\hline   1 &  7(25) &    3365856 & 29 &  8(24) & 4 \\
\hline   1 &  9(23) &   28048800 & 28 & 16 & 16 \\
\hline   1 & 11(21) & 129024480 & 27 & 16 & 16 \\
\hline   1 & 13(19) & 347373600 & 27 &  8(24) & 8 \\
\hline   1 & 15(17) & 565722720 & 26 & 16 & 32 \\
\hline   2 &  2(30) &          256 & 26 &  8(24) & 16 \\
\hline   2 &  4(28) &       17920 & 25 & 12(20) & 56 \\
\hline   2 &  6(26) &     453376 & 25 &   4(28) &  8 \\
\hline   2 &  8(24) &    5258240 & 24 & 16 & 256 \\
\hline   2 & 10(22) &  32258304 & 23 & 16 & 192 \\
\hline   2 & 12(20) & 112892416 & 23 & 12(20) & 128 \\
\hline   2 & 14(18) & 235723520 & 23 &  8(24) &   32 \\
\hline   2 & 16 & 300533760 & 22 & 16 & 384 \\
\hline   3 &  2(30) &          128 & 22 & 12(20) & 256 \\
\hline   3 &  4(28) &        8960 &  22 &  8(24) &  64 \\
\hline   3 &  6(26) &     226688 &  21 & 16 & 832 \\
\hline   3 &  8(24) &    2629120 &  21 & 12(20) & 496 \\
\hline   3 & 10(22) &  16129152 &  21 &   8(24) & 96 \\
\hline   3 & 12(20) &  56446208 &  21 &   4(28) & 16 \\
\hline   3 & 14(18) & 117861760 &  20 & 16 & 1536 \\
\hline   3 & 16 & 150266880 &  20 & 12(20) & 1024 \\
\hline   ... & ... &          ... &  20 &   8(24) & 256 \\
\hline  17 &  2(30) &           16 &  19 &  16 & 3200 \\
\hline  17 &  6(26) &          560 &  19 &  12(20) & 2016 \\
\hline  17 & 10(22) &        4368 &  19 &   8(24) &  448 \\
\hline  17 & 14(18) &       11440 &  19 &   4(28) &   32 \\
\hline
\end{tabular}
\end{table}
\section{Construction of Polar Codes}
\label{section_VI}
In this section, we consider the construction of polar codes based on the UB bound. First, we derive the logarithmic version metric based on UB bound, named UB weight (UBW). Then a more simple metric, named simplified UB weight (SUBW), is designed.
\subsection{Construction Metrics based on UB bound}
By Proposition \ref{proposition4}, we can use the UB bound of the channel error probability as a reliability metric to sort all the polarized channels. Consider the practical application, the logarithmic form of the UB bound is more convenient, which can be written as
\begin{equation}
\begin{aligned}
&\ln \left\{\sum \limits_{d} A_N^{(i)}(d) (Z(W))^d\right\}\\
&=\ln \left\{\sum \limits_{d} \exp \left[\ln {A_N^{(i)}(d)}+d\ln(Z(W))\right]\right\}\\
&\approx \max_{d}\left\{L_N^{(i)}(d)+d\ln (Z(W))\right\}.
\end{aligned}
\end{equation}
Here, we use the approximation $\ln\left( \sum\limits_{k} e^{a_k} \right)\approx \max\limits_{k}\{a_k\}$ and $L_N^{(i)}(d)=\ln\left[A_N^{(i)}(d)\right]$ is the logarithmic form of the polar weight enumerator.
Then given the B-DMC $W$ and the code length $N$, the reliability of the polarized channel can be sorted by the {\bf{UB Weight}} (UBW), that is,
\begin{equation}\label{UBW}
UBW_N^{(i)}=\max_{d}\left\{L_N^{(i)}(d)+d\ln (Z(W))\right\}.
\end{equation}


\begin{example}(UBW in BEC or BSC)
Given the Bhattacharyya parameter of the BEC $Z(W)=\epsilon$, UBW can be written as
\begin{equation}
UBW_N^{(i)}=\max_{d}\left\{L_N^{(i)}(d)+d\ln \epsilon\right\}.
\end{equation}
Similarly, UBW for the BSC $\left(Z(W)=2\sqrt {\delta(1-\delta)}\right)$ can be written as
\begin{equation}
UBW_N^{(i)}=\max_{d}\left\{L_N^{(i)}(d)+\frac{d}{2}\ln [4\delta(1-\delta)]\right\}.
\end{equation}
\end{example}

\begin{example}(UBW in AWGN channel)
For the BI-AWGN channel $W$, the Bhattacharyya parameter is $Z(W)=\exp(-\frac{E_s}{N_0})$. Thus the corresponding UBW can be expressed as
\begin{equation}\label{UBW_AWGN}
UBW_N^{(i)}=\max_{d}\left\{L_N^{(i)}(d)-d\frac{E_s}{N_0}\right\}.
\end{equation}
Especially, if the symbol SNR in UBW is instead of a fixed constant $\alpha$, we obtain a channel-independent construction metric.
\end{example}
\subsection{Construction Metrics based on simplified UB bound}
If we only consider the first term of the polar spectrum, that is, the minimum weight enumerator, the UBW given above can be further simplified as follows. 

Given the B-DMC $W$ and the code length $N$, the reliability of the polarized channel can be ordered by the {\bf{simplified UB Weight}} (SUBW), that is,
\begin{equation}
{SUBW}_N^{(i)}=L_N^{(i)}\left(d_{min}^{(i)}\right)+d_{min}^{(i)}\ln (Z(W)).
\end{equation}
Correspondingly, we can obtain the simplified union bound of the block error probability of polar codes
\begin{equation}
P_e(N,K,\mathcal{A})\lesssim \sum\limits_{i\in \mathcal{A}} A_N^{(i)}\left(d_{min}^{(i)}\right) P_N^{(i)}\left(d_{min}^{(i)}\right)
\end{equation}
or the simplified UB bound
\begin{equation}\label{equation65}
P_e(N,K,\mathcal{A})\lesssim \sum\limits_{i\in \mathcal{A}} A_N^{(i)}\left(d_{min}^{(i)}\right) (Z(W))^{d_{min}^{(i)}}.
\end{equation}

\begin{example}(SUBW in AWGN channel)
The SUBW for the BI-AWGN channel can be expressed as
\begin{equation}\label{SUBW_AWGN}
SUBW_N^{(i)}=L_N^{(i)}\left(d_{min}^{(i)}\right)-d_{min}^{(i)}\frac{E_s}{N_0}.
\end{equation}
Similarly, we can design a channel-independent construction metric by selecting a suitable constant $\alpha$ for the symbol SNR in SUBW.
\end{example}

\begin{example}
Table \ref{reliability_order} shows the reliability order of the polarized channels based on various constructions for the code length $N=32$ and the symbol SNR $\frac{E_s}{N_0}$ is fixed as $4$ dB. As shown in this table, the reliability order obtained by sorting UBW or SUBW from low to high is almost the same as that ordered by the GA algorithm except the polarized channels with indices 4 and 17. However, the reliability order obtained by sorting PW is different from that ordered by GA algorithm at the polarized channels with indices 20, 15, 8 and 25.
\end{example}

\begin{table*}[tp]
\centering
\caption{Reliability order example based on various constructions for $N=32$} \label{reliability_order}
\begin{tabular}{|c|c|c|c|c|}
\hline GA & PW &  Bhattacharyya & UBW & SUBW \\
\hline 32 & 32 & 32  & 32  & 32 \\
\hline 31 & 31 & 31  & 31  & 31 \\
\hline 30 & 30 & 30  & 30  & 30 \\
\hline 28 & 28 & 28  & 28  & 28 \\
\hline 24 & 24 & 24  & 24  & 24 \\
\hline 16 & 16 & 16  & 16  & 16 \\
\hline 29 & 29 & 29  & 29  & 29 \\
\hline 27 & 27 & 27  & 27  & 27 \\
\hline 26 & 26 & 26  & 26  & 26 \\
\hline 23 & 23 & 23  & 23  & 23 \\
\hline 22 & 22 & 22  & 22  & 22 \\
\hline 20 & 15 & 20  & 20  & 20 \\
\hline 15 & 20 & 15  & 15  & 15 \\
\hline 14 & 14 & 14  & 14  & 14 \\
\hline 12 & 12 & 12  & 12  & 12 \\
\hline 8  & 25 &  8  & 8   & 8  \\
\hline 25 & 8  & 25  & 25  & 25 \\
\hline 21 & 21 & 21  & 21  & 21 \\
\hline 19 & 19 & 19  & 19  & 19 \\
\hline 13 & 13 & 13  & 13  & 13 \\
\hline 18 & 18 & 18  & 18  & 18 \\
\hline 11 & 11 & 11  & 11  & 11 \\
\hline 10 & 10 & 10  & 10  & 10 \\
\hline 7  & 7  &  7  & 7   & 7  \\
\hline 6  & 6  &  6  & 6   & 6  \\
\hline 4  & 4  &  4  & 17  & 17 \\
\hline 17 & 17 & 17  & 4   & 4  \\
\hline 9  & 9  &  9 & 9   & 9  \\
\hline 5  & 5  &  5 &  5   & 5  \\
\hline 3  & 3  &  3 & 3   & 3  \\
\hline 2  & 2  &  2 & 2   & 2  \\
\hline 1  & 1  &  1 & 1   & 1  \\
\hline
\end{tabular}
\end{table*}

\begin{remark}
The UBW and SUBW are deduced from the error probability of the polarized channel, that is, union-Bhattacharyya bound analysis. They are determined by the polar spectrum and the Bhattacharyya parameter. Compared with the traditional constructions, such as, DE, GA, and Tal-Vardy algorithm etc., these metrics have explicitly analytical structure. Particularly, since the polar spectrum can be off-line calculated by using Algorithm \ref{algorithm1}, the complexity of these constructions is linear $O(N)$, which is much lower than that of the former algorithms. On the other hand, compared with the SNR-independent polarized weight (PW) construction, these constructions can also be modified to the SNR-independent constructions. However, PW is an empirical construction and lack of theoretic explanation. On the contrary, UBW or SUBW has a good analytical property. Therefore, the construction based on UBW or SUBW has dual advantages: theoretic analyticity and practical application.
\end{remark}
\section{Numerical Analysis and Simulation Results}
\label{section_VII}
In this section, we will provide the numerical and simulation results based on the polar spectrum. First, we compare the upper bounds in term of the polar spectrum with the traditional bounds based on  Tal-Vardy algorithm, GA or Bhattacharyya parameter under BEC, BSC and AWGN channel. Then, the BLER simulation results based on various constructions under AWGN channel are analyzed and compared.
\subsection{Numerical analysis of upper bounds}
In this part, we compare six different upper bounds of BLER which can be divided into two categories, the traditional bounds and the proposed ones based on the polar spectrum. The traditional upper bounds include the Tal-Vardy bound \cite{Tal_Vardy}, GA bound \cite{GA_Trifonov} and the bound provided by Ar{\i}kan in \cite{Polarcode_Arikan} as presented in (\ref{equation8}) while the proposed upper bounds contain the union bound given in (\ref{union_bound}), the UB bound expressed as (\ref{UB_bound}) and the simplified UB bound (\ref{equation65}).

\begin{figure*}[htbp]
\setlength{\abovecaptionskip}{0.cm}
\setlength{\belowcaptionskip}{-0.cm}
  \centering{\includegraphics[scale=0.93]{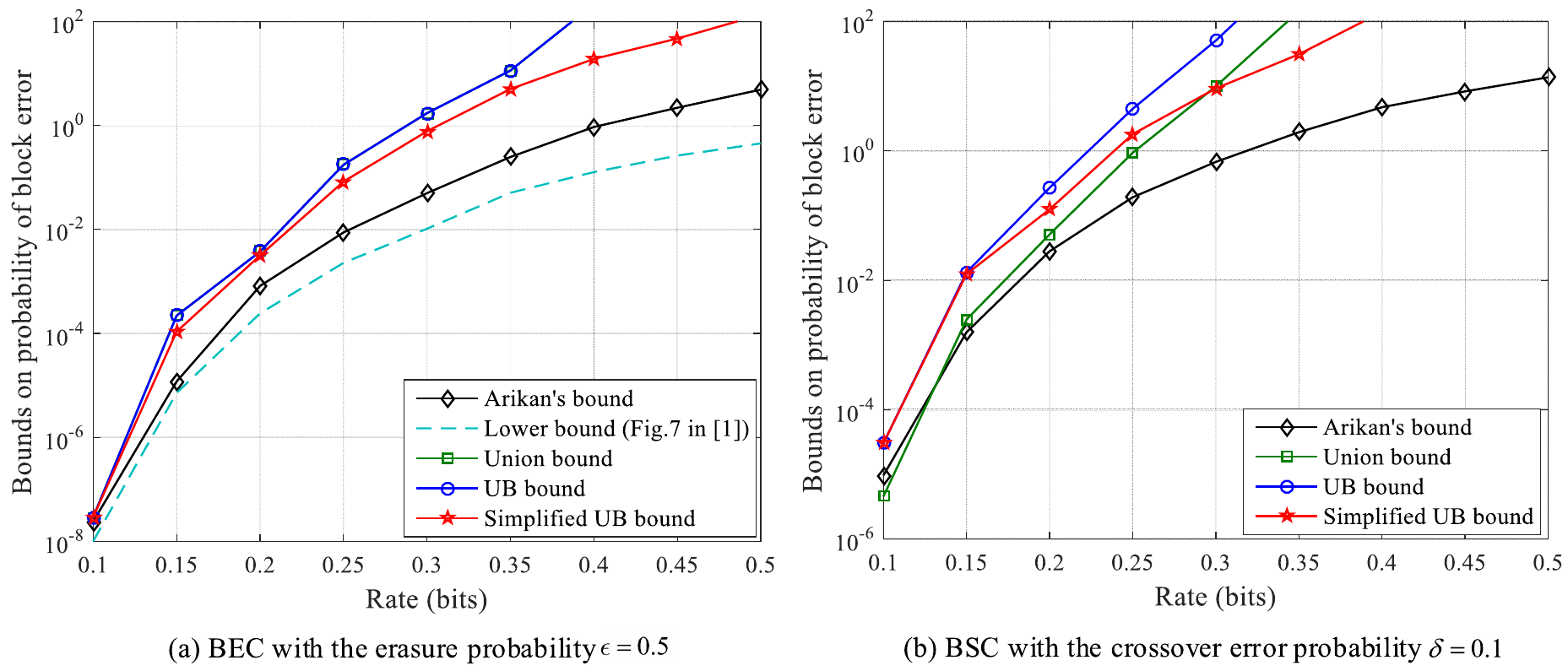}}
  \caption{The BLER upper bounds for polar coding with SC decoding at code length $N=128$ under different channel conditions.}\label{upper_bound_BEC_BSC}
\end{figure*}

Given the code length $N = 128$, Fig. \ref{upper_bound_BEC_BSC} provides the upper bounds of BLER using SC decoding under BEC with the erasure probability $\epsilon=0.5$ and BSC with the crossover error probability $\delta=0.1$. The upper bound presented in (\ref{equation8}) is named as ``Ar{\i}kan's bound'' hereafter and the lower bound marked by the dash line is obtained from $\max\limits_{i\in\mathcal{A}}\left\{Z\left(W_N^{(i)}\right)\right\}$ (See Fig. 7 in \cite{Polarcode_Arikan}). As shown in Fig. \ref{upper_bound_BEC_BSC}(a), all the upper bounds grow larger as the code rate increases. However, it can be observed that the UB bound and the union bound (\ref{equation23}) coincide under the BEC and they tend to diverge when the code rate is greater than $0.35$, while the simplified UB bound avoids this problem and is looser than the Ar{\i}kan's bound. When the BSC is considered, as shown in Fig. \ref{upper_bound_BEC_BSC}(b), the union bound (\ref{equation26}) is closer to the Ar{\i}kan's bound when the code rate is smaller than $0.3$, otherwise, the simplified UB bound is closer.

\begin{figure*}[htbp]
\setlength{\abovecaptionskip}{0.cm}
\setlength{\belowcaptionskip}{-0.cm}
  \centering{\includegraphics[scale=0.91]{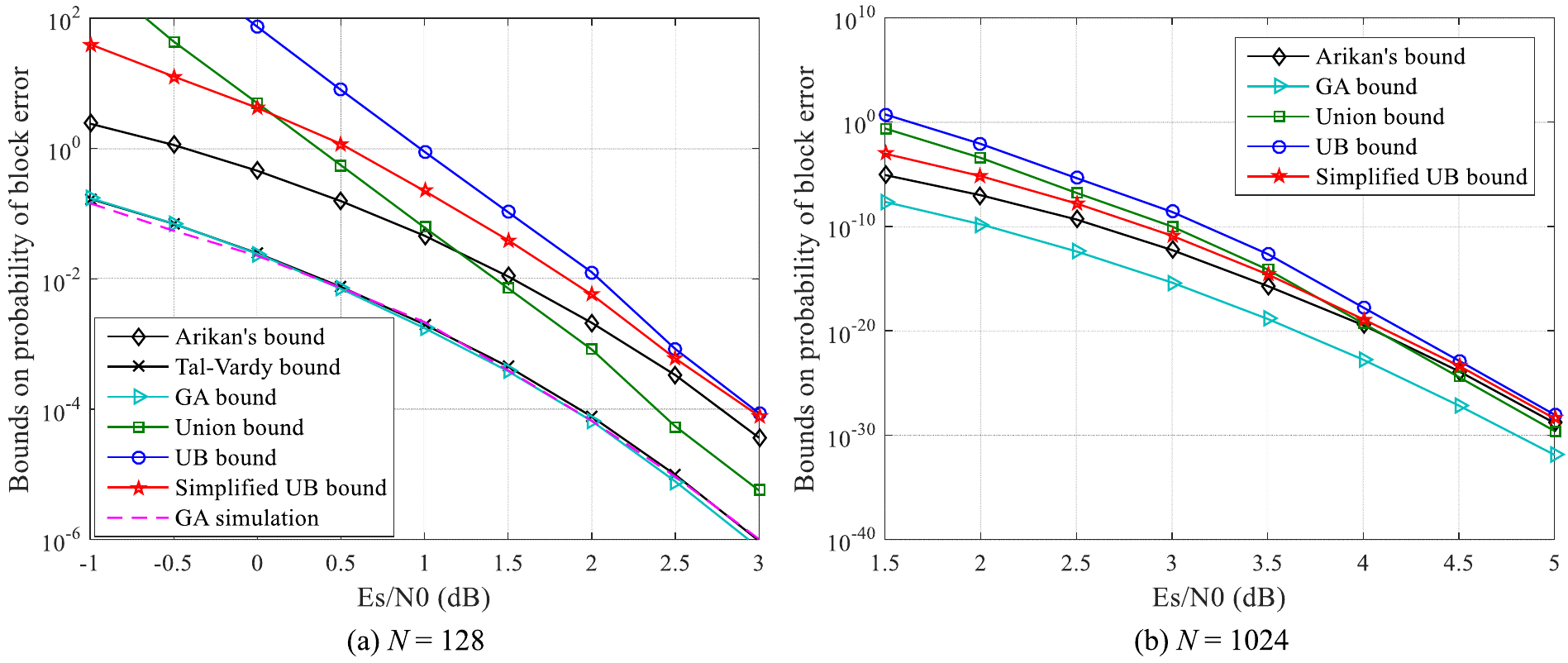}}
  \caption{The BLER upper bounds for polar coding with SC decoding over AWGN channel, where the code rate $R=0.5$.}\label{upper_bound_AWGN_N128_1024}
\end{figure*}

The BLER upper bounds of SC decoding under AWGN channel are depicted in Fig. \ref{upper_bound_AWGN_N128_1024}. As shown in Fig. \ref{upper_bound_AWGN_N128_1024}(a), where the code length $N$ is set to $128$ and the code rate is $R=0.5$, all the upper bounds dramatically decrease with the increase of the symbol SNR. Due to the approximation calculation of Bhattacharyya parameter in AWGN channel, Ar{\i}kan's bound is looser than GA and Tal-Vardy bounds. In addition, the union bound (\ref{equation33}) tends to be more closer to the simulation result (marked by dash line and constructed based on GA algorithm) than the Ar{\i}kan's bound when the symbol SNR exceeds $1.2$ dB. It also can be observed that the UB bound (\ref{equation40}) and the simplified UB bound (\ref{equation65}) are gradually close to the Ar{\i}kan's bound with the increase of the symbol SNR. Similar observations can be found in Fig. \ref{upper_bound_AWGN_N128_1024}(b), which provides the BLER upper bounds of SC decoding under AWGN channel with the configuration $N=1024$ and $R=0.5$.

Although the Tal-Vardy and GA bounds are tightly close to the BLER simulation result, they involve a complex on-line iterative-calculation. On the contrary, the proposed UB and simplified UB bounds not only have a linear complexity since the polar spectrum can be calculated off-line, but also can be used to deduce two explicit and analytical construction metrics to construct good polar codes. The BLER simulation results based on these explicit constructions will be presented in the next subsection.

\subsection{Simulation Results}
In this part, considering the AWGN channel, we compare the BLER simulation performances of polar codes generated by the proposed UBW/SUBW constructions and the traditional methods, which include Tal and Vardy's algorithm \cite{Tal_Vardy}, GA \cite{GA_Trifonov}, PW \cite{PW_He} and the one based on Bhattacharyya parameter proposed by Ar{\i}kan in \cite{Polarcode_Arikan}. The code length $N$ is in \{128, 1024, 4096\} and the code rate $R$ is choose from \{$\frac{1}{3}$, $\frac{1}{2}$, $\frac{2}{3}$\}. For Tal and Vardy's method, the output alphabet size $\mu$ is set to $256$.
\begin{figure*}[htbp]
\setlength{\abovecaptionskip}{0.cm}
\setlength{\belowcaptionskip}{-0.cm}
  \centering{\includegraphics[scale=0.95]{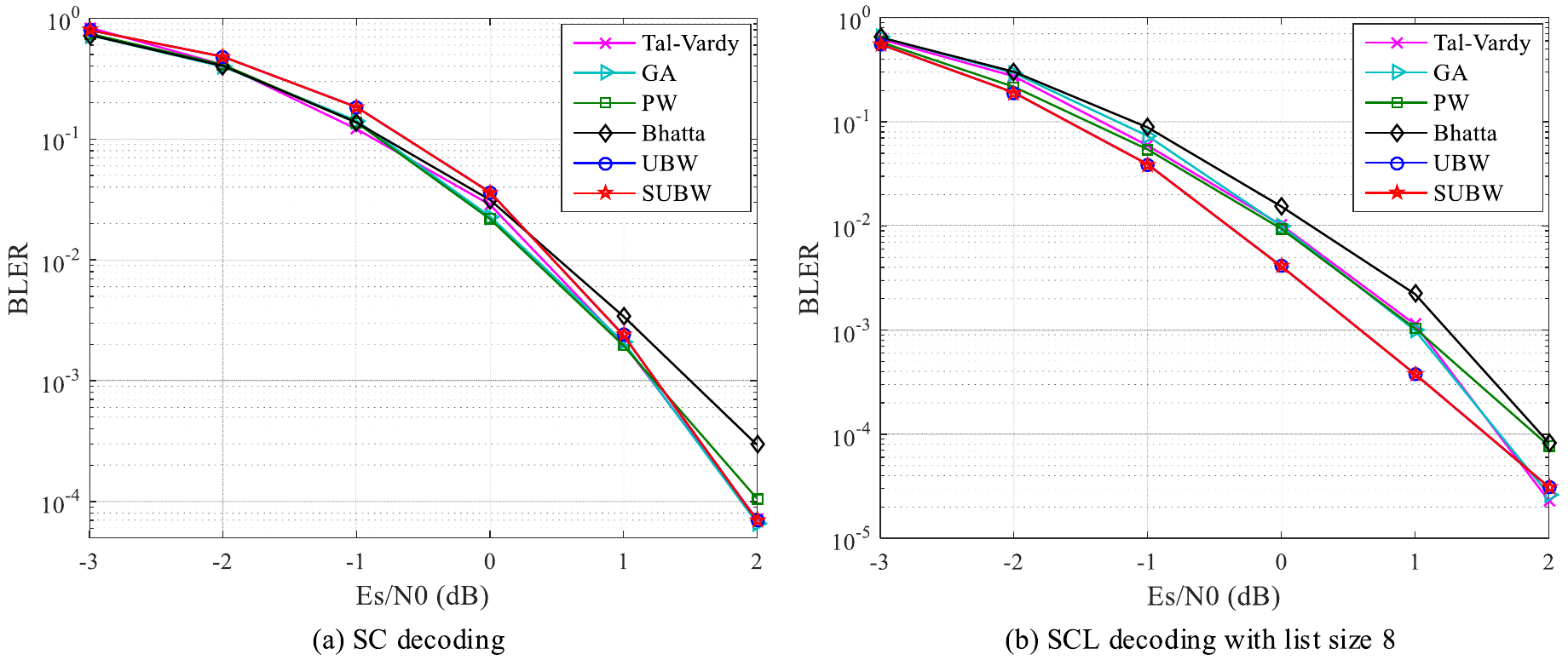}}
  \caption{The BLER performances comparison among the polar codes constructed based on Tal and Vardy's algorithm, GA, PW, Bhattacharyya parameter, UBW and SUBW under AWGN channel, where $N = 128$ and $R=0.5$.}\label{AWGN_N128_SC_SCL}
\end{figure*}

Fig. \ref{AWGN_N128_SC_SCL} provides the BLER performances comparison among the various constructions with $N = 128$ and $R=0.5$. For UBW/SUBW, the fixed symbol SNR in (\ref{UBW_AWGN})/(\ref{SUBW_AWGN}) is set to $\alpha=4$ dB. It can be observed from Fig. \ref{AWGN_N128_SC_SCL}(a) that the polar codes constructed by UBW/SUBW can achieve similar performance of those constructed by Tal-Vardy, GA or PW algorithm under SC decoding. Moreover, the polar codes constructed by UBW/SUBW outperform those constructed based on the Bhattacharyya parameter in the case of $\frac{E_s}{N_0}>0.28 \text{dB}$. When the SCL decoding with list size $L=8$ is used, as shown in Fig. \ref{AWGN_N128_SC_SCL}(b), the polar codes constructed by UBW/SUBW outperform those generated by the traditional constructions by about $0.4\sim0.6$ dB as the BLER is $10^{-3}$.

\begin{figure}[htbp]
\setlength{\abovecaptionskip}{0.cm}
\setlength{\belowcaptionskip}{-0.cm}
  \centering{\includegraphics[scale=0.7]{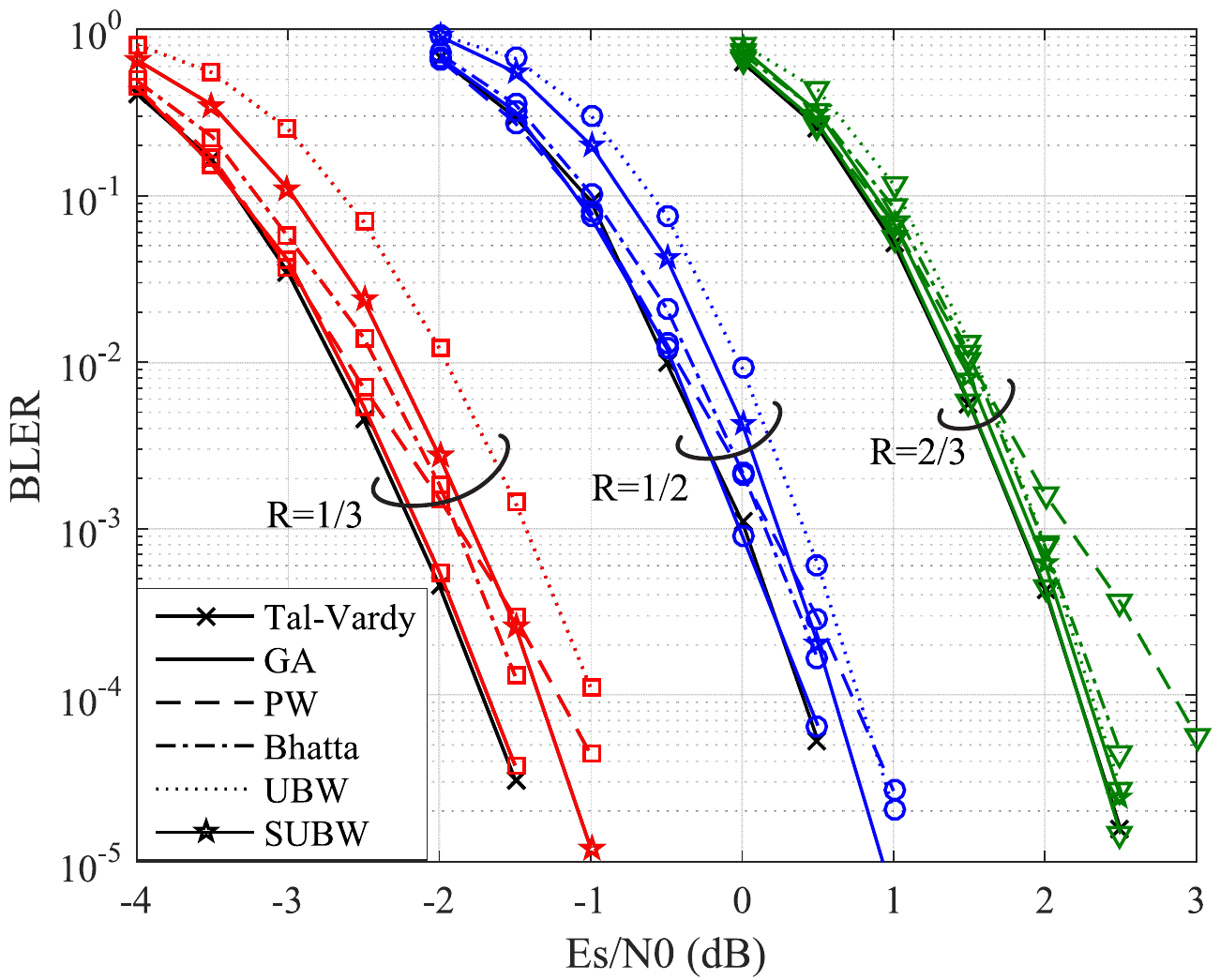}}
  \caption{The BLER performances comparison among the polar codes constructed based on Tal and Vardy's algorithm, GA, PW, Bhattacharyya parameter, UBW and SUBW under AWGN channel, where $N = 1024$ and SC decoding is used.}\label{AWGN_N1024_SC}
\end{figure}

Given the code length $N=1024$, Fig. \ref{AWGN_N1024_SC} presents the comprehensive BLER performances comparison among the various constructions under SC decoding. The symbol SNR in (\ref{UBW_AWGN}) for UBW is respectively set to $1.5$, $4$ and $4.5$ dB for the code rate $R=\frac{1}{3}$, $\frac{1}{2}$ and $\frac{2}{3}$, while that for SUBW in (\ref{SUBW_AWGN}) is respectively fixed as $1$, $3.5$ and $4$ dB. As shown in Fig. \ref{AWGN_N1024_SC}, when a low to medium code rate is considered, namely, $R=\frac{1}{3}$  or $\frac{1}{2}$, the polar codes constructed by UBW/SUBW perform slightly worse than those constructed by GA or Bhattacharyya parameter because the UB bound of the BLER under the SC decoding is looser than the GA bound or Ar{\i}kan's bound. However, similar to the PW construction, the UBW/SUBW constructions ensure the explicity at the cost of some performance losses.
In addition, when the code rate $R=\frac{2}{3}$ is considered, the polar codes constructed by UBW/SUBW can achieve nearly the same performance of those constructed by Tal-Vardy or GA. Apart from these, one can also observe that the polar codes constructed by PW show error floor in the high SNR region, which can be avoided by UBW/SUBW.

\begin{figure}[htbp]
\setlength{\abovecaptionskip}{0.cm}
\setlength{\belowcaptionskip}{-0.cm}
  \centering{\includegraphics[scale=0.7]{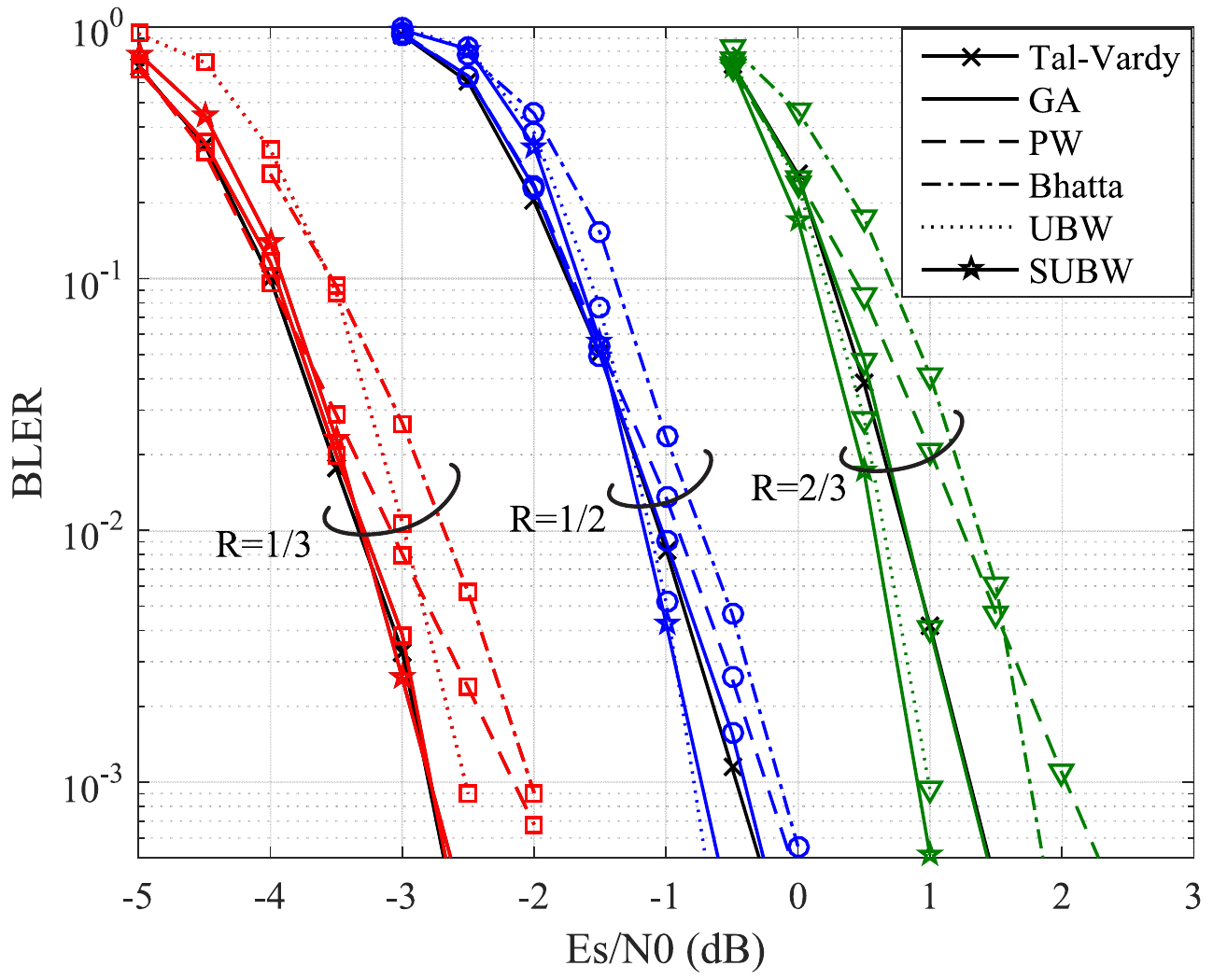}}
  \caption{The BLER performances comparison among the polar codes constructed based on Tal and Vardy's algorithm, GA, PW, Bhattacharyya parameter, UBW and SUBW under AWGN channel, where $N = 1024$ and SCL decoding with list size 16 is used.}\label{AWGN_N1024_SCL}
\end{figure}

For the code length $N=1024$, the BLER performances comparison among the various constructions under SCL decoding with list size $L=16$ is shown in Fig. \ref{AWGN_N1024_SCL}. It can be observed that the polar codes constructed by UBW/SUBW can achieve better performance than those constructed by Tal-Vardy/GA/PW or the construction based on Bhattacharyya parameter. The performance gain of polar codes constructed by UBW/SUBW becomes larger with the increase of code rate. For example, given $R=\frac{1}{2}$, when the BLER is $10^{-3}$, polar codes constructed by SUBW achieve $0.3$/$0.5$ dB gain compared to those constructed by GA/PW, it then becomes $0.4$/$1.3$ dB under $R=\frac{2}{3}$. Actually, polar codes constructed by UBW/SUBW benefit from the adequate utilization of the polar spectrum, which is vital for the polar code construction under the SCL decoding. Furthermore, polar codes constructed by SUBW outperform those constructed by UBW when $R=\frac{1}{3}$ because the minimum weight term of the polar spectrum plays an important role in the analysis of error probability of the polarized channels.

\begin{figure*}[htbp]
\setlength{\abovecaptionskip}{0.cm}
\setlength{\belowcaptionskip}{-0.cm}
  \centering{\includegraphics[scale=0.93]{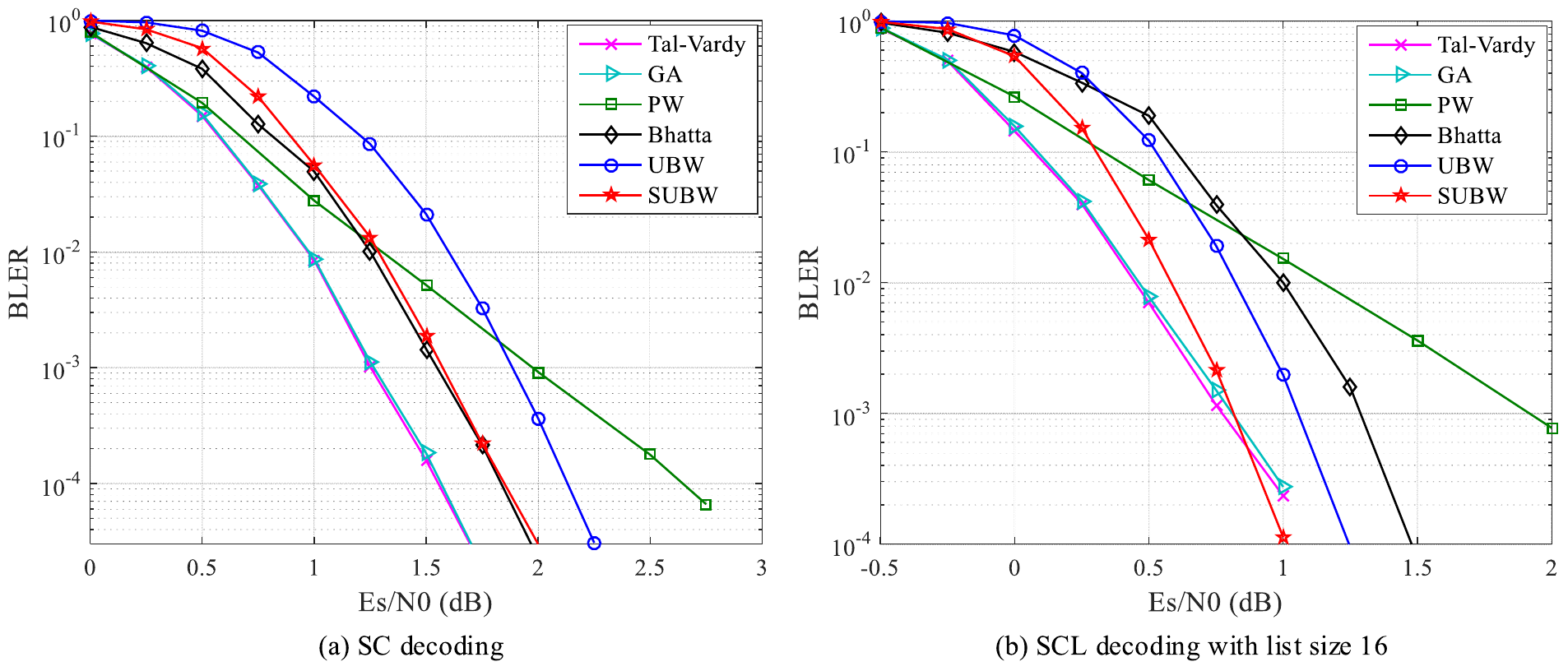}}
  \caption{The BLER performances comparison among the polar codes constructed based on Tal and Vardy's algorithm, GA, PW, Bhattacharyya parameter, UBW and SUBW under AWGN channel, where $N = 4096$ and $R=\frac{2}{3}$.}\label{AWGN_N4096_SC_SCL}
\end{figure*}

Fig. \ref{AWGN_N4096_SC_SCL} provides the BLER performances comparison among the various constructions with $N = 4096$ and $R=\frac{2}{3}$. In this case, the symbol SNR in (\ref{UBW_AWGN})/(\ref{SUBW_AWGN}) for UBW/SUBW is set to $4.5$/$3.5$ dB. It can be observed from Fig. \ref{AWGN_N4096_SC_SCL}(a) that polar codes constructed by UBW/SUBW perform worse than those constructed by Tal-Vardy or GA under SC decoding but avoid the error floor in the high SNR region, which is obviously shown in the polar codes constructed by PW. When the SCL decoding with list size $L=16$ is used, as shown in Fig. \ref{AWGN_N4096_SC_SCL}(b), the polar codes constructed by SUBW can achieve $0.5$/$1.1$ dB gain compared to those constructed by Bhattacharyya parameter/PW when the BLER is $10^{-3}$.

In general, UBW and SUBW can be regarded as two good constructions, which are simple and explicit for the practical polar coding as well as able to generate polar codes with superior performance under SCL decoding over those based on the traditional methods in most cases.
\section{Conclusions}
\label{section_VIII}
In this paper, we introduce a new concept named polar spectrum from the codeword weight distribution of polar codes and then establish a systematic framework in term of the polar spectrum to analyze and construct polar codes. On the basis of the polar spectrum, we derive the union bound and UB bound of the polarized channel and further upper bound the BLER of SC decoding. In addition, we propose an iterative algorithm embedded the solution of MacWilliams identities to enumerate the polar spectrum of polar codes, which has a low complexity and a high efficiency compared with the traditional searching algorithm. Finally, we design two explicit and analytical construction metrics named UBW and SUBW, which have a linear complexity far below those constructions based on the iterative calculation since the polar spectrum can be calculated off-line. Simulation results show that the polar codes constructed by these two metrics can achieve similar performance of those constructed by GA or PW algorithm under SC decoding and even superior performance under SCL decoding.

\end{document}